\theoremstyle{definition}
\newtheorem{definition}{Definition}
\newtheorem{theorem}{Theorem}
\newtheorem{corollary}{Corollary}[theorem]
\newtheorem{assumption}{Assumption}
\theoremstyle{remark}
\newcommand*{\rom}[1]{\expandafter\@slowromancap\romannumeral #1@}
\newcommand{\cali}{\mathcal{I}}
\newcommand{\cald}{\mathcal{D}}
\newcommand{\calf}{\mathcal{F}}
\newcommand{\bbe}{\mathbb{E}}
\newcommand{\bbp}{\mathbb{P}}
\newcommand{\mh}{\mathrm{h}}
\newcommand{\mj}{\mathrm{j}}
\newcommand{\mo}{\mathrm{o}}
\renewcommand{\mp}{\mathrm{p}}
\newcommand{\mw}{\mathrm{w}}
\newcommand{\sat}{\mathrm{sat}}
\newcommand{\md}{\mathrm{d}}
\begin{document}
%
\title{Energy-efficient Connected Cruise Control with Lean Penetration of Connected Vehicles}

\author{Minghao~Shen,
Chaozhe R. He, Tamas G. Molnar, A. Harvey Bell,
and~G\'abor~Orosz 
\thanks{This research was supported by the University of Michigan’s Center for Connected and Automated Transportation through 
the US DOT grant 69A3551747105.}
\thanks{M.~Shen, A.~H.~Bell, and G.~Orosz are with the Department
of Mechanical Engineering, University of Michigan, Ann Arbor,
MI 48109, USA (e-mail: mhshen@umich.edu, ahbelliv@umich.edu, orosz@umich.edu).}
\thanks{C.~R.~He is with PlusAI Inc., Cupertino, CA 95014, USA (email: hchaozhe@umich.edu).}
\thanks{T.~G.~Molnar is with the Department of Mechanical and Civil Engineering, California Institute of Technology, Pasadena, CA 91125, USA (email: tmolnar@caltech.edu).}
\thanks{G.~Orosz is also with the Department of Civil and Environmental Engineering, University of Michigan, Ann Arbor, MI 48109, USA.}
}

\maketitle

\begin{abstract}
This paper focuses on energy-efficient longitudinal controller design for a connected automated truck that travels in mixed traffic consisting of connected and non-connected vehicles. The truck has access to information about connected vehicles beyond line of sight using vehicle-to-vehicle (V2V) communication. A novel connected cruise control design is proposed which incorporates additional delays into the control law when responding to distant connected vehicles to account for the finite propagation of traffic waves. The speeds of non-connected vehicles are modeled as stochastic processes. A fundamental theorem is proven which links the spectral properties of the motion signals to the average energy consumption. This enables us to tune controller parameters and maximize energy efficiency. Simulations with synthetic data and real traffic data are used to demonstrate the energy efficiency of the control design. It is demonstrated that even with lean penetration of connected vehicles, our controller can bring significant energy savings.
\end{abstract}

\begin{IEEEkeywords}
Connected and Automated Vehicles, Eco-driving
\end{IEEEkeywords}

\IEEEpeerreviewmaketitle

\section{Introduction}
\IEEEPARstart{E}{nergy} saving is an everlasting theme for the truck industry, since it has the potential to provide great financial and environmental benefits to the industry and the society. Vehicle automation and connectivity technologies may bring new opportunities for energy saving. On one hand, automated vehicles can be designed such that their controllers are carefully calibrated for energy efficiency.
This includes longitudinal control systems such as adaptive cruise control~(ACC)~\cite{ard2021energy}, and predictive cruise control, where the speed profile is optimized according to the road elevation~\cite{he2016fuel,anil2020improving}.
On the other hand, vehicle-to-everything (V2X) communication facilitates information sharing and cooperation between vehicles.
This can be categorized into status-sharing, intent-sharing, agreement-seeking and prescriptive cooperation~\cite{SAEJ3216}.
A popular approach to utilize cooperation in control is cooperative adaptive cruise control (CACC)~\cite{johansson2017cooperative, Borrelli_CACC, XiaoyunLu_energy_CACC, wang2018review_CACC, wouw2019MPC} in which a platoon of connected automated vehicles is controlled to achieve great energy benefits. However, this requires all the vehicles involved to be connected and automated, which is not achievable in the near future.
In the forthcoming decades, researchers and engineers will need to deal with mixed traffic where vehicles may have various levels of connectivity and automation.

The potential energy impact of connected vehicles in mixed traffic has attracted increasing attention in the recent years~\cite{Vahidi_energy_potential_cav}, with scenarios including highways~\cite{yao2021cav_fuel_highway, ard2020cav_microsimulation, ard2021energy}, intersections~\cite{zhang2019cavimpact_intersection} and roundabouts~\cite{zhao2018cav_roundabouts}.
In this paper, we design a connected cruise controller (CCC)~\cite{gabor2016ccc} to control the longitudinal dynamics of a connected automated truck (CAT) traveling in mixed traffic which consists of connected and non-connected vehicles.
We do not require the other vehicles to be automated, and only assume lean penetration of connectivity. 
We consider the lowest level of cooperation with other connected vehicles, i.e., status-sharing cooperation, where the CAT obtains the position and speed of vehicles ahead via V2X communication.
We show that even lean penetration of connected vehicles can provide significant energy benefits for the CAT. This gives incentive to early adoptions of connectivity technologies.

In mixed traffic, with lean penetration of connectivity and low-level cooperation, a key challenge is how to acquire information about surrounding traffic with high confidence.
The CAT may be connected to vehicles in the far distance only, while surrounding non-connected vehicles may exhibit a large variety of different motions. Also, a controller that ensures high energy efficiency for one motion profile may perform poorly for another one.
One common approach is to predict the motion of preceding vehicles first, then optimize motion of the ego vehicle accordingly~\cite{Anna_prediction, Johansson_predictive_framework}. While long accurate predictions can lead to large energy savings~\cite{he2019fuel}, such predictions are hard to acquire. As uncertainties grow with the length of prediction horizon, selected optimal action could suffer large performance variations and degradation. With V2X technology, the beyond-line-of-sight information~\cite{orosz2017seeing} can potentially improve the prediction accuracy. But with lean penetration of connected vehicles, the predict-then-optimize approach may still suffer similar performance variation.

In this paper, instead of pursuing more precise prediction of transient human behavior, we minimize the energy consumption in the average sense based on vehicle trajectory data. We integrate data-driven methods and classical traffic models to optimize the energy efficiency of the connected cruise controller.
Our contributions are threefold.
First, we propose a stochastic modeling framework for the vehicles' motion and apply spectral methods to evaluate the effects of longitudinal controllers. 
The spectral properties of vehicle motion are estimated using data collected via V2X~\cite{shen2021stochastic_CCC}, and a new fundamental theorem is proven to link these properties to the average energy consumption.
Second, we account for the fact that connected vehicles may travel far ahead of the CAT, hence it may not be optimal to respond to their motions immediately.
We propose a novel CCC design which incorporates additional delays into the response to the motion of distant vehicles~\cite{shen2021delay_info} allowing the CAT to ``wait" for the velocity fluctuations to propagate along the vehicle chain.
Third, we optimize the controller gains as well as the additional delays introduced using traffic data. The optimality of the controller parameters are validated statistically using large amount of synthetic data as well as experimental data.

The remainder of this paper is organized as follows. In Section~\ref{sec:ccc_design}, we model the dynamics of the CAT, and formulate a connected cruise controller using delayed V2X information and highlight the parameters to be optimized. In Section~\ref{sec:stochastic_modeling}, we include necessary mathematical background on stochastic modeling. In Section~\ref{sec:controller_optimization}, we model the motion of preceding vehicles as stochastic processes, and establish the optimization problem which enable us to find the energy optimal controller parameters. In Section~\ref{sec:results}, we validate the optimality in terms of energy consumption based on large amount of simulations. Section~\ref{sec:concl} concludes the paper and discusses future directions.

\begin{figure}[!t]
    \centering
    \includegraphics[width=0.45\textwidth]{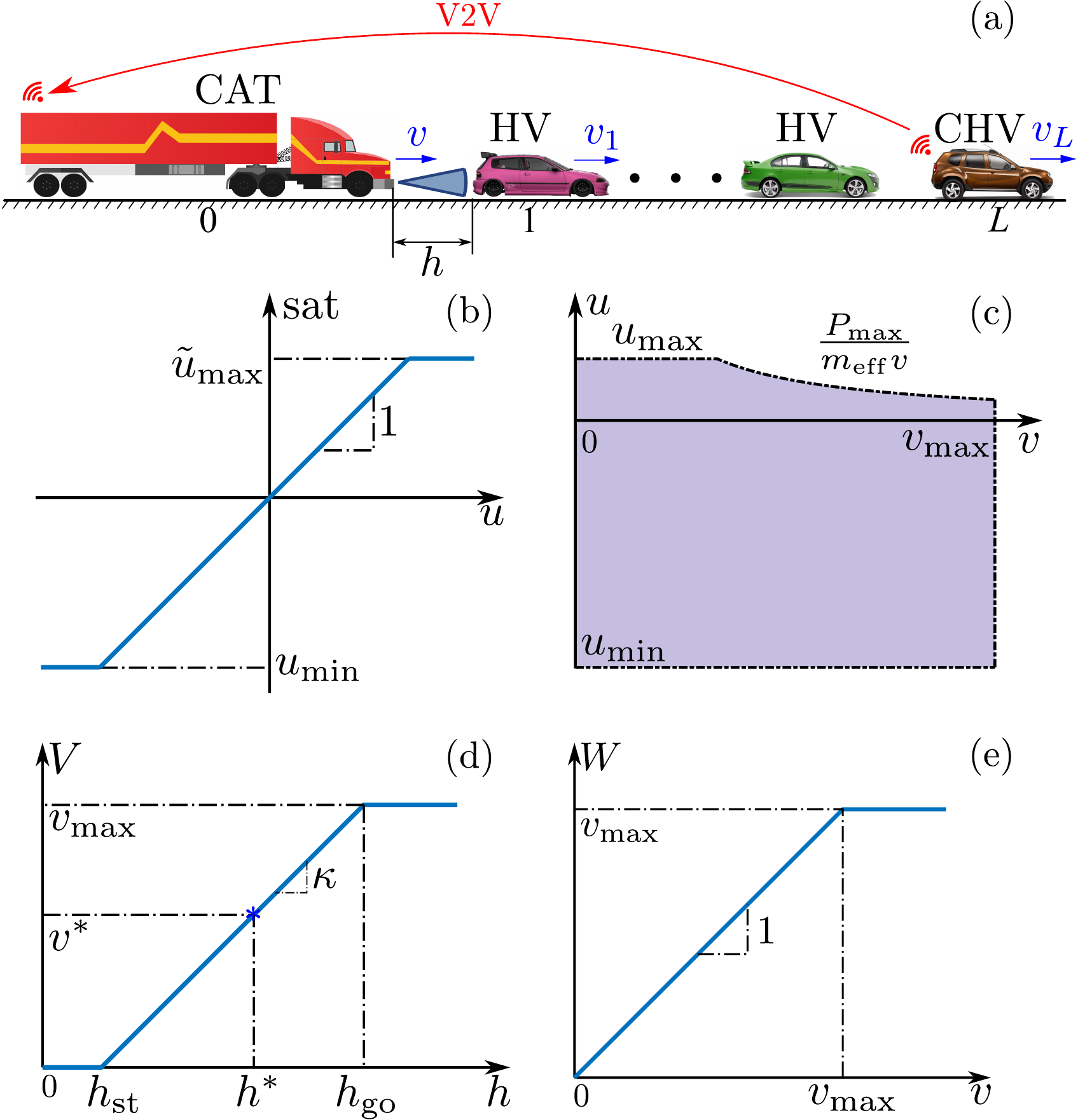}
    \caption{(a) A connected automated truck (CAT) driving in mixed traffic that consists of connected human-driven vehicles (CHVs) and non-connected human-driven vehicles (HVs). (b,c) Saturation function. (d) Range policy. (e) Speed policy.}
    \label{fig:multi_leader_diagram_with_fun}
\end{figure}

\section{Connected Cruise Control Design}\label{sec:ccc_design}

In this section, we design a longitudinal controller for a connected and automated truck (CAT) that drives in mixed traffic consisting of connected human-driven vehicles (CHVs) and non-connected human-driven vehicles (HVs); see Fig.~\ref{fig:multi_leader_diagram_with_fun}(a). The truck can measure its own speed $v$, the distance headway $h$ and the speed $v_1$ of the vehicle immediately ahead using range sensors such as camera, LiDAR or radar. Assuming there is no elevation along the road and no headwind, we can formulate the longitudinal dynamics of the truck as 
\begin{equation}\label{eqn:longitudinal_dynamics}
\begin{split}
\dot{h}(t) &= v_1(t) - v(t)\,,
\\
\dot{v}(t) &= -\frac{1}{m_{\mathrm{eff}}}\left(mg\xi + k v^2(t)\right) + \frac{T_{\mathrm{w}}(t)}{m_{\mathrm{eff}}R}\,.
\end{split}
\end{equation}
Here the dot refers to differentiation with respect to time $t$. The effective mass ${m_{\mathrm{eff}}=m+I/R^2}$ consists of the mass $m$ of the truck and the mass moment of inertia $I$ of its rotating elements. The radius of the wheels is denoted by $R$, $g$ is the gravitational constant, $\xi$ denotes the rolling resistance coefficient, and $k$ is the air drag coefficient incorporating air density and the vehicle's frontal area. In this paper, we choose $m=29484~[\mathrm{kg}]$, $I=39.9~[\mathrm{kg\ m^2}]$, $R=0.504~[\mathrm{m}]$, $\xi=0.006$, $k=3.84~[\mathrm{kg/m}]$~\cite{he2016fuel}.
We describe the nonlinear physical effects by the function
\begin{equation}
f(v) = \frac{1}{m_{\mathrm{eff}}}\left(mg\xi + k v^2\right)\,.
\end{equation}

To control the longitudinal motion of the truck, the wheel torque $T_{\mathrm{w}}$ is generated to achieve desired acceleration. When ${T_{\mathrm{w}}>0}$, the torque is provided by the powertrain, while when ${T_{\mathrm{w}}<0}$ the torque comes from the braking system. The control input $u$ is considered to be the commanded longitudinal acceleration. The effect of the control input is subject to a time delay and saturation:
\begin{equation}
\frac{T_{\mathrm{w}}(t)}{m_{\mathrm{eff}}R} = \sat\big(u(t-\sigma)\big)\,,
\end{equation}
where $\sigma$ models the delay in the powertrain system and the saturation function is given by
\begin{equation}\label{eqn:sat_definition}
\sat(u) = \left\{\begin{matrix*}[l]
u_{\min} &\mathrm{if} \quad u\leq u_{\min} \,, 
\\ 
u &\mathrm{if} \quad u_{\min}<u<\tilde{u}_{\max} \,,
\\ 
\tilde{u}_{\max} & \mathrm{if} \quad u\geq \tilde{u}_{\max} \, ,
\end{matrix*}\right.
\end{equation}
and
\begin{equation}\label{eqn:tilde_u_definition}
\tilde{u}_{\max} = \min\left\{u_{\max}, \frac{P_{\max}}{m_{\mathrm{eff}}v}\right\} \, .
\end{equation}
The saturation results from the limited available engine torque (associated with $u_{\max}$), engine power $P_{\max}$ and braking torque (associated with $u_{\min}$). They are illustrated in Fig.~\ref{fig:multi_leader_diagram_with_fun}(b,c). Here we consider the parameters
${u_{\min} = -6~[\mathrm{m}/\mathrm{s}^{2}]}$, ${u_{\max} = 2~[\mathrm{m}/\mathrm{s}^{2}]}$, ${P_{\max} = 300.65~[\mathrm{kW}]}$, ${m_{\mathrm{eff}} = 29641~[\mathrm{kg}]}$; see~\cite{he2019fuel}.

Considering the nonlinear physical effects $f(v)$, the controller consists of two terms
\begin{equation}\label{eqn:longitudinal_dynamics2}
u(t) = \tilde{f}\big(v(t)\big) + a_\md(t) \, ,
\end{equation}
where compensation term $\tilde{f}(v)$ is implemented by a lower-level controller in order to cancel the resistance term, while $a_\md$ defines the desired acceleration given by a higher-level controller that is to be designed.

With vehicle-to-everything (V2X) communication, the truck has access to information about preceding vehicles including their position and speed. In this paper, we investigate how to drive in an energy-efficient manner, and how the information from V2X communication can help to reduce the energy consumption. Define $\cali$ as the set of vehicles whose motion information the CAT has access to. This includes the vehicle immediately ahead of the truck, which is monitored by the range sensors, as well as the CHVs who share their motion information via V2X connectivity. For example, in Fig.~\ref{fig:multi_leader_diagram_with_fun}(a), the CAT has access to information about vehicles indexed $1$ and $L$, i.e., ${\cali=\{1,L\}}$. We consider longitudinal controllers of the form
\begin{equation}\label{eqn:ccc_ad}
a_\md(t) = \alpha \big(V(h(t)) - v(t)\big) + \sum_{i\in \cali} \beta_i\big(W(v_i(t-\sigma_i)) - v(t)\big) \,,
\end{equation}
where the first term is designed to maintain the headway $h$ and the second term aims to match the speed $v$ of the truck with the speed $v_i$ of the preceding vehicles. A key novelty of our controller design is to intentionally delay the response by time $\sigma_i$ in the second term. Through V2X communication, the truck can potentially connect to vehicles a few hundred meters in the distance. The behavior of the vehicles far ahead may not immediately influence the behavior of vehicles surrounding the truck, instead, the effects may be show up a few seconds later as traffic waves propagate backward with a finite speed along vehicle chains~\cite{tamas_l4dc_prediction, tamas2021TRC_Lagrangian}. Therefore, the truck can intentionally ``wait" for some time, which is determined by $\sigma_i$, and then respond to the behavior of the vehicle far in the distance.

The nonlinear functions $V$ and $W$ in \eqref{eqn:ccc_ad} are the range policy and the speed policy, respectively. The range policy
\begin{equation}\label{eqn:Vfun_def}
V(h) = \max\big\{0, \min\{\kappa(h-h_{\mathrm{st}}), v_{\max}\} \big\}\,,
\end{equation}
shown in Fig.~\ref{fig:multi_leader_diagram_with_fun}(d), is the desired speed as a function of the headway. When the headway is small, the truck intends to stop, while when the headway is large, it aims to travel at maximum speed $v_{\max}$. For headways in between, the desired speed increases linearly with gradient ${\kappa = v_{\max} / (h_{\mathrm{go}} - h_{\mathrm{st}})}$. The speed policy
\begin{equation}\label{eqn:Wfun_def}
W(v) =  \min\{v, v_{\max}\}\,,
\end{equation}
shown in Fig.~\ref{fig:multi_leader_diagram_with_fun}(e), is designed to keep the speed of truck under the speed limit $v_{\max}$ when preceding vehicles are speeding. In this paper, we set ${v_{\max} = 35~[\mathrm{m}/\mathrm{s}]}$, 
${h_{\mathrm{st}} = 5~[\mathrm{m}]}$, ${h_{\mathrm{go}} = 63.33~[\mathrm{m}]}$, yielding $\kappa = 0.6~[1/\mathrm{s}]$.

The most fundamental requirement for the controller~\eqref{eqn:ccc_ad} is to realize stable motion for the CAT.
In order to analyze the stability of the closed-loop system defined by (\ref{eqn:longitudinal_dynamics},\ref{eqn:longitudinal_dynamics2},\ref{eqn:ccc_ad}), we linearize the system around the equilibrium
\begin{equation}\label{eq:equilibrium}
h(t)  \equiv h^*\, , \quad v(t) = v_i(t) \equiv v^* = V(h^*) \, , \end{equation}
for ${i \in \cali}$. Defining the headway and speed perturbations ${\tilde{h} = h - h^*}$, ${\tilde{v} = v - v^*}$, ${\tilde{v}_i = v_i - v^*}$, 
we may obtain the linearized dynamics in the form
\begin{equation}\label{eqn:linearized_system}
\begin{split}
\dot{\tilde{h}}(t) &= \tilde{v}_1(t) - \tilde{v}(t) \, ,
\\
\dot{\tilde{v}}(t) & = \alpha \big(\kappa \tilde{h}(t-\sigma) - \tilde{v}(t-\sigma)\big)
\\
& + \sum_{i\in \cali} \beta_i\big(\tilde{v}_{i}(t-(\sigma+\sigma_i)) - \tilde{v}(t-\sigma)\big)\, .
\end{split}
\end{equation}

For analysis in frequency domain, we apply the Laplace transform with zero initial condition, which leads to
\begin{equation}\label{eqn:tf_overall_sum}
V(s) = \sum_{i\in \cali} T_{i}(s) V_{i}(s)\, .
\end{equation}
Here $V(s)$ and $V_i(s)$ denote the Laplace transforms of the speed perturbation $\tilde{v}(t)$ of the CAT and the speed perturbations $\tilde{v}_i(t)$ of the preceding vehicles, while the link transfer functions are defined as
\begin{equation}\label{eqn:tf_link}
T_1(s) = \frac{\beta_1 s + \alpha \kappa}{\cald(s)}\, , \quad
T_{i}(s) = \frac{\beta_i s e^{-s\sigma_i}}{\cald(s)}\, , 
\end{equation}
for ${i\in \cali \setminus \{1\}}$ where
\begin{equation}\label{eqn:char_eq}
\cald(s) =s^2e^{s\sigma} + \bigg(\alpha + \sum_{i\in\cali}\beta_i\bigg)s + \alpha\kappa~,
\end{equation}
gives the characteristic function.

In order to ensure that the truck is able to approach the equilibrium
\eqref{eq:equilibrium}, the linearized system \eqref{eqn:linearized_system} needs to be plant stable~\cite{zhang2016motif}. That is, all roots of the characteristic equation ${\cald(s) = 0}$ must have negative real parts.
This is satisfied when the parameters ${(\alpha, \beta_i), i\in \cali}$ are selected from the region
\begin{equation}\label{eq:plant_stable}
\begin{split}
&\alpha > 0 \, ,
\\
\underline{\omega}\sin(\underline{\omega}\sigma) - \alpha \leq &\sum_{i\in \cali} \beta_i < \overline{\omega}\sin(\overline{\omega}\sigma) - \alpha\, ,
\end{split}
\end{equation}
where $\underline{\omega}$ and $\overline{\omega}$ are the solutions of the transcendental equation ${\alpha\kappa = \omega^2\cos(\omega\sigma)}$ such that ${0<\underline{\omega}<\overline{\omega}<\frac{\pi}{2}}$. Note that the additional delay $\sigma_i$ does not influence the plant stability of the closed-loop system~\cite{shen2021delay_info}.

To evaluate the energy consumption of the CAT, we use the energy consumption per unit mass over time interval ${t\in [t_0, t_{\rm f}]}$ as metrics:
\begin{equation}\label{eqn:w_definition}
w = \int_{t_0}^{t_{\rm f}} v(t)g\big(\dot{v}(t) + f(v(t))\big) \md t \, ,
\end{equation}
where ${g(x) = \max\{x, 0\}}$, so energy is only consumed when ${u>0}$. In this paper, we consider trucks with internal combustion engines. For hybrid electric vehicles or electric vehicles, one may choose different expression for $g$~\cite{vahidi2020book}. Our goal is to find the controller parameters ${(\alpha, \beta_i, \sigma_i), i\in \cali}$ that minimize $w$ while also ensuring plant stability.

\section{Stochastic Modeling}\label{sec:stochastic_modeling}

In this section, we propose a stochastic approach where we model the motion of the preceding vehicles using stochastic processes. For simplicity, we limit our analysis to a specific family of stochastic processes, Gaussian processes, which result in physically realistic vehicle motions.

Consider a closed-loop system with dynamics (\ref{eqn:longitudinal_dynamics},\ref{eqn:longitudinal_dynamics2},\ref{eqn:ccc_ad}) where the inputs ${v_i, i\in \cali}$ are described by stochastic processes. The goal is to relate the gain parameters ${(\alpha,\beta_i), i\in\cali}$ and the delays ${\sigma_i, i\in\cali}$ through the system output $v$ to the energy consumption $w$ defined in \eqref{eqn:w_definition}. To simplify the analysis, we make three assumptions about the input processes ${v_i, i\in \cali}$: (i) they are wide-sense stationary (WSS); (ii) they are differentiable; (iii) they are Gaussian processes. We discuss these assumptions more rigorously below and relate them to spectral theory.

The stationarity assumption enables us to apply spectral analysis, and link the controller parameters to the characteristics of the output process $v$. To achieve this we need a few definitions.
\begin{definition}[Strict-sense Stationary (SSS)]
A stochastic process $\{X_t\}_{t\in T}$ is \textit{strict-sense stationary} if for any indices ${t_1,\cdots,t_k\in T}$ and sets ${A_1,\cdots,A_k}$, the probabilities
\begin{equation}
\bbp(X_{t_1 + t}\in A_1,\cdots, X_{t_k + t}\in A_k)\, ,
\end{equation}
do not depend on $t$, where $t\in T$.
\end{definition}
Specifically, choosing ${t_1=0}$ and ${k=1}$, shows that the marginal distribution of random variable $X_t$ is  time-invariant. In general, SSS is a strong requirement which is hard to satisfy. However, in many cases, the first and second moments of the distribution can provide enough information. Thus, many theories, such as spectral analysis, only require wide-sense stationarity, where stationarity is enforced only on first and second moments.
\begin{definition}[Mean and Correlations]
For a stochastic process $\{X_t\}_{t\in T}$, the \textit{mean} and the \textit{autocorrelation} are given by 
\begin{equation}\label{eq:moments}
\mu_{X}(t) = \bbe[X_t]\,, \quad R_{XX}(s, t) = \bbe[X_s X_t] \,,  
\end{equation}
where $\bbe[\cdot]$ denotes the expected value. Considering another stochastic process $\{Y_t\}_{t\in T}$ defined on the same probability space, we define the \emph{cross-correlation} as
\begin{equation}
\begin{split}
R_{XY}(s, t) = \bbe[X_s Y_t] \, .
\end{split}
\end{equation}
\end{definition}
\begin{definition}[Wide-sense Stationary (WSS)]
A stochastic process $\{X_t\}_{t\in T}$ is called wide-sense stationary if there exist a constant $m$ and a function $r(t),\, t\in T$, such that
\begin{equation}\label{eqn:wss_def}
\mu_X(t)\equiv m\,, \quad R_{XX}(s,t) = r(t-s)\,, \quad \forall s,t \in T\,.
\end{equation}
\end{definition}
That is, when $\{X_t\}_{t\in T}$ is WSS, ${R_{XX}(s, t)}$ is a function of ${(t-s)}$ and we can write ${R_{XX}(\tau) = R_{XX}(t-s)}$ without ambiguity. One may verify that autocorrelation is symmetric, that is, ${R_{XX}(s,t) = R_{XX}(t,s)}$ for a general stochastic process, yielding ${R_{XX}(\tau) = R_{XX}(-\tau)}$ for a WSS process. Similarly, the cross-correlation is also symmetric. Also note that the autocorrelation $R_{XX}(0)$ gives the second moments; cf.~\eqref{eq:moments}. We assume that speed perturbations of the preceding vehicles $\tilde{v}_i$ are WSS, that is, ${v_i = v^* + \tilde{v}_i}$ where $v^*$ denotes the equilibrium speed and ${\mu_{\tilde{v}_i}=0}$, for all ${i\in \cali}$. 

For a signal that satisfies WSS condition, we can apply spectral analysis and determine the input/output relationship for linear time-invariant (LTI) systems. Such analysis utilizes the power spectral density which can be defined via the continuous-time Fourier transform of the WSS process.
\begin{definition}[Power spectral density~\cite{hajek2015random}]
For WSS process $X_t$, the \textit{power spectral density} is the Fourier transform of autocorrelation function:
\begin{equation}
S_{XX}(\omega) = \calf[R_{XX}(\tau)] = \int_{-\infty}^{\infty} R_{XX}(\tau)e^{-\mj\omega\tau}\md \tau\, ,
\end{equation}
where $\omega$ denotes the angular frequency.
\end{definition}
Since ${R_{XX}(\tau) = R_{XX}(-\tau)}$, the power spectral density $S_{XX}(\omega)$ is a non-negative real number and one can also show that ${S_{XX}(\omega) = S_{XX}(-\omega)}$. 
For LTI systems with input being a WSS process, the power spectral density of the output process can be calculated using the following theorem~\cite{hajek2015random}.
\begin{theorem}[Spectral Analysis of LTI Systems]\label{thm:psd_LTI}
\textit{
For a linear time invariant system with transfer function $G(s)$, if the input signal $X_t$ is a WSS process, then the output signal $Y_t$ is also WSS. The first and second moments of $Y_t$ are given by
\begin{equation}
\mu_{Y} = G(0)\mu_{X}\, , \quad
S_{YY}(\omega) = |G(\mj \omega)|^2 S_{XX}(\omega)\, .
\end{equation}
}
\end{theorem}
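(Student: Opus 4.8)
The plan is to represent the LTI system in the time domain by its impulse response and reduce both claims to properties of a single convolution, then verify the two moment formulas directly. First I would write the output as the convolution $Y_t = \int_{-\infty}^{\infty} h_G(\tau) X_{t-\tau}\,\md\tau$, where $h_G(\tau) = \calf^{-1}[G(\mj\omega)]$ is the impulse response associated with $G$; this mean-square-sense representation is well defined because plant stability (established in \eqref{eq:plant_stable}) makes $h_G$ absolutely integrable. The one technical point I would flag up front is the repeated interchange of $\bbe[\cdot]$ with the integral over $\tau$: I would justify each instance by Fubini's theorem, using $h_G \in L^1$ together with the finite second moments of the WSS input $X_t$, so that the relevant single and double integrals converge absolutely.

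For the mean, I would push the expectation inside the convolution and use $\bbe[X_{t-\tau}] \equiv \mu_X$ (constant, since $X_t$ is WSS) to obtain $\mu_Y(t) = \mu_X \int_{-\infty}^{\infty} h_G(\tau)\,\md\tau = G(0)\,\mu_X$, the last equality being $G(\mj\omega)\big|_{\omega=0} = \int h_G(\tau)\,\md\tau$. Since this is independent of $t$, the first WSS requirement is met and the mean formula follows at once.

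For the autocorrelation I would write $R_{YY}(t_1,t_2) = \bbe[Y_{t_1} Y_{t_2}]$ as the double integral $\int\!\int h_G(\tau_1) h_G(\tau_2)\,\bbe[X_{t_1-\tau_1} X_{t_2-\tau_2}]\,\md\tau_1\,\md\tau_2$ and replace the inner expectation by $R_{XX}\big((t_2-t_1) - (\tau_2-\tau_1)\big)$, again invoking WSS of $X_t$. The key observation is that the result depends on $(t_1,t_2)$ only through $\tau := t_2 - t_1$, which simultaneously establishes that $Y_t$ is WSS and yields $R_{YY}(\tau) = \int\!\int h_G(\tau_1) h_G(\tau_2) R_{XX}(\tau - \tau_2 + \tau_1)\,\md\tau_1\,\md\tau_2$.

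Finally, to reach the spectral formula I would apply the Fourier transform $S_{YY}(\omega) = \int R_{YY}(\tau) e^{-\mj\omega\tau}\,\md\tau$ and substitute $\eta = \tau - \tau_2 + \tau_1$ in the $\tau$-integral, which factors the triple integral into a product of three single integrals: $\int h_G(\tau_1) e^{\mj\omega\tau_1}\,\md\tau_1 = G(-\mj\omega) = \overline{G(\mj\omega)}$ (the last equality because $h_G$ is real-valued), $\int h_G(\tau_2) e^{-\mj\omega\tau_2}\,\md\tau_2 = G(\mj\omega)$, and $\int R_{XX}(\eta) e^{-\mj\omega\eta}\,\md\eta = S_{XX}(\omega)$. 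Multiplying these gives $S_{YY}(\omega) = \overline{G(\mj\omega)}\,G(\mj\omega)\,S_{XX}(\omega) = |G(\mj\omega)|^2 S_{XX}(\omega)$, as claimed. I expect the only genuine obstacle to be the rigorous justification of the Fubini interchanges and of the existence of the mean-square convolution; once absolute integrability of $h_G$ (from plant stability) and finiteness of the input second moments are secured, the remaining steps are routine changes of variables.
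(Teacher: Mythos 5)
Your proof is correct and is essentially the standard impulse-response/convolution argument that the paper itself defers to (it cites Chapter~8.2 of the Hajek reference rather than proving the theorem in-line): compute the output mean and autocorrelation through the convolution, observe that the latter depends only on the time difference, and Fourier-transform with a change of variables to factor out $|G(\mj\omega)|^2$. Your handling of the technical points --- absolute integrability of the impulse response and the Fubini interchanges justified by the finite second moments of the WSS input --- is exactly what is needed, so there is nothing to add.
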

The proof can be found in Chapter~8.2 of~\cite{hajek2015random}.

Similarly, the \textit{cross power spectral density} can be defined as the Fourier transform of the cross-correlation function
\begin{equation}
S_{XY}(\omega) = \calf[R_{XY}(\tau)]\, ,
\end{equation}
which may be a complex number. The following theorem defines the input/output relationship of signals passing through different LTI systems~\cite{hajek2015random}.
\begin{theorem}\label{thm:csd_LTI}
\textit{
Given two signals $X_t$ and $Y_t$ separately passing through two LTI systems with transfer functions $G_1(s)$ and $G_2(s)$, respectively,  the cross power spectral density of the corresponding outputs  $Z_t$ and $P_t$ is
\begin{equation}\label{eqn:cpsd_two_LTI}
S_{Z P}(\omega) = G_1(\mj \omega)G_2^*(\mj \omega)S_{XY}(\omega)\, ,
\end{equation}
where star denotes complex conjugate.
}
\end{theorem}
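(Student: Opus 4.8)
The plan is to mirror the proof of Theorem~\ref{thm:psd_LTI}, but now tracking the joint second-order statistics of the two separately filtered signals. First I would pass to the time domain: letting $g_1$ and $g_2$ denote the impulse responses associated with $G_1$ and $G_2$, so that $G_k(\mj\omega)=\int_{-\infty}^{\infty} g_k(\theta)e^{-\mj\omega\theta}\,\md\theta$, the outputs are the convolutions
\begin{equation}\label{eqn:plan_conv}
Z_t = \int_{-\infty}^{\infty} g_1(\theta)\,X_{t-\theta}\,\md\theta\,, \qquad
P_t = \int_{-\infty}^{\infty} g_2(\eta)\,Y_{t-\eta}\,\md\eta\,.
\end{equation}
Here I assume, as is implicit in the very existence of $S_{XY}$, that $X_t$ and $Y_t$ are \emph{jointly} WSS, i.e.\ $R_{XY}(s,t)=R_{XY}(t-s)$.

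Second, I would form the cross-correlation $R_{ZP}(s,t)=\bbe[Z_s P_t]$, insert the representations \eqref{eqn:plan_conv}, and interchange the expectation with the two integrals. Using linearity of $\bbe[\cdot]$ this turns the product of the two integrals into
\begin{equation}\label{eqn:plan_double}
R_{ZP}(s,t) = \int_{-\infty}^{\infty}\!\int_{-\infty}^{\infty} g_1(\theta)\,g_2(\eta)\,R_{XY}(s-\theta,\,t-\eta)\,\md\theta\,\md\eta\,.
\end{equation}
Invoking joint stationarity, the integrand depends on $s,t$ only through $R_{XY}\big((t-s)+\theta-\eta\big)$; hence $R_{ZP}(s,t)$ is a function of $\tau=t-s$ alone. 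This simultaneously establishes that the outputs $(Z_t,P_t)$ are themselves jointly WSS and yields $R_{ZP}(\tau)=\int\!\int g_1(\theta)g_2(\eta)R_{XY}(\tau+\theta-\eta)\,\md\theta\,\md\eta$.

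Third, I would take the Fourier transform $S_{ZP}(\omega)=\calf[R_{ZP}(\tau)]$ and substitute $u=\tau+\theta-\eta$ inside the $\tau$-integral. Since $e^{-\mj\omega\tau}=e^{-\mj\omega u}\,e^{\mj\omega\theta}\,e^{-\mj\omega\eta}$, the triple integral factors into a product of three decoupled single integrals: one in $u$ that reproduces $S_{XY}(\omega)=\calf[R_{XY}(u)]$, and two over the impulse responses that reproduce the transfer functions on the imaginary axis, \emph{one of them complex-conjugated} because its integration variable enters the exponent with the opposite sign. Collecting the three factors gives the claimed identity $S_{ZP}(\omega)=G_1(\mj\omega)G_2^*(\mj\omega)S_{XY}(\omega)$. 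As a sanity check, specializing to $X_t=Y_t$ and $G_1=G_2=G$ collapses the formula to $|G(\mj\omega)|^2 S_{XX}(\omega)$, recovering Theorem~\ref{thm:psd_LTI}.

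Two points deserve care, and I expect the second to be the genuine obstacle. The interchange of expectation and integration in \eqref{eqn:plan_double} must be justified rigorously, e.g.\ by absolute integrability of $g_1,g_2$ together with finiteness of the second moments of $X_t,Y_t$, which lets Fubini's theorem exchange $\bbe[\cdot]$ with the integrals; this is routine for stable LTI systems. The delicate bookkeeping is the placement of the complex conjugate: whether the star lands on $G_1$ or on $G_2$ is governed entirely by the opposite signs that $\theta$ and $\eta$ acquire after the substitution, hence by the sign conventions fixed for the Fourier transform and for the argument ordering in $R_{ZP}$. Getting this backwards produces the conjugate spectrum $S_{PZ}(\omega)=S_{ZP}^*(\omega)$ instead, so this is the step I would verify most carefully against the conventions established above.
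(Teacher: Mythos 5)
The paper offers no proof of this theorem --- it defers to Chapter~8.2 of the Hajek reference --- and your argument is exactly the standard convolution/Fubini/change-of-variables proof that lives there, so the structure is right. However, the step you yourself flagged as delicate is where your writeup actually goes wrong: starting from $R_{ZP}(\tau)=\int\!\int g_1(\theta)g_2(\eta)R_{XY}(\tau+\theta-\eta)\,\md\theta\,\md\eta$ and substituting $u=\tau+\theta-\eta$, the $\theta$-integral carries the factor $e^{+\mj\omega\theta}$ and therefore equals $G_1(-\mj\omega)=G_1^*(\mj\omega)$ (for a real impulse response), while the $\eta$-integral carries $e^{-\mj\omega\eta}$ and equals $G_2(\mj\omega)$. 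So the three factors you collect are $G_1^*(\mj\omega)G_2(\mj\omega)S_{XY}(\omega)$, not the claimed $G_1(\mj\omega)G_2^*(\mj\omega)S_{XY}(\omega)$; you assert the target identity without the algebra delivering it.

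The source of the mismatch is the lag convention you inherited from the paper's Definition~3, namely $R_{XY}(\tau)=\bbe[X_t Y_{t+\tau}]$. The theorem as stated is the one you get under the opposite convention $R_{XY}(\tau)=\bbe[X_{t+\tau}Y_t]$ (the one used in the cited reference), under which the kernel in your double integral becomes $R_{XY}(\tau-\theta+\eta)$ and the factorization puts $e^{-\mj\omega\theta}$ on $g_1$ and $e^{+\mj\omega\eta}$ on $g_2$, yielding $G_1(\mj\omega)G_2^*(\mj\omega)S_{XY}(\omega)$ as claimed. To close the proof you must fix that convention explicitly at the outset; otherwise what you have proved is the conjugate statement $S_{ZP}=G_1^*G_2S_{XY}$. (Two mitigating remarks: your sanity check $X=Y$, $G_1=G_2$ is insensitive to the error since $|G|^2$ appears either way, and in the paper's only application --- the symmetric double sum in Theorem~3 --- the placement is also immaterial because $S_{\tilde v_j\tilde v_i}(\omega)=S_{\tilde v_i\tilde v_j}^*(\omega)$. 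But as a proof of the theorem as stated, the conjugate must land on $G_2$.)
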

The proof can be found in Chapter 8.2 of~\cite{hajek2015random}.

In practice, it is reasonable to assume that, the speeds of preceding vehicles are continuous and differentiable, i.e., the acceleration is continuous.
Specifically, for a WSS process $X_t$, we have the following properties of the time derivative $\dot{X}_t$:
\begin{enumerate}[label=(\alph*)]
\item $X_t$ and $\dot{X}_t$ are jointly WSS
\item $\mu_{\dot{X}}(t) = 0$
\item $R_{\dot{X}X}(\tau) = \frac{\md}{\md \tau}R_{XX}(\tau)= -R_{X\dot{X}}(\tau)$
\item $R_{\dot{X}X}(0) = R_{X\dot{X}}(0)=0$\
\item $R_{\dot{X}\dot{X}}(\tau)=-\frac{\md ^2}{\md \tau^2}R_{XX}(\tau)$
\end{enumerate}
The proof can be found in Chapter 7.2 of~\cite{hajek2015random}.

Apart from being differentiable, the speeds of preceding vehicles are assumed to be Gaussian processes. This simplifies the analysis and enables us to derive analytical results.
\begin{definition}[Gaussian Process (GP)]
A stochastic process $\left\{X_t\right\}_{t\in T}$ is a \textit{Gaussian process} if for every finite set of indices ${t_1,\cdots, t_k \in T}$, ${X(t_1,\cdots, t_k) = \left(X_{t_1}, \cdots, X_{t_k}\right)}$ is multivariate Gaussian random variable.
\end{definition}
Gaussian process has the following nice properties~\cite{hajek2015random}.
\begin{enumerate}[label=(\alph*)]
\item Gaussian process is uniquely determined by its mean function and autocorrelation function.
\item If a Gaussian process is WSS, then it is SSS.
\item For a linear system, if the input signal is a Gaussian process, then the output is also a Gaussian process.
\item If a Gaussian process $X_t$ is mean square differentiable, then $\dot{X}_t$ is also a Gaussian process.
\end{enumerate}

In this section, we have established necessary properties of the motions of preceding vehicles by assuming them to be stationary differentiable Gaussian processes. We are now ready to apply spectral analysis on the closed-loop linearized system \eqref{eqn:linearized_system} to derive the distribution of the CAT's motion, as well as the average energy consumption.
This allows us to optimize the controller parameters ${(\alpha, \beta_i, \sigma_i), i\in \cali}$ for energy efficiency based on the data obtained from V2X connectivity.

\section{Data-driven Controller Optimization}\label{sec:controller_optimization}

In this section, we propose a method to determine the energy-optimal parameters for the proposed controller using traffic data. First we derive an optimization problem assuming oracle knowledge about the spectral density of the preceding vehicles' speed. Then we introduce two estimators for the cross power spectral density, and finally formalize the data-driven controller optimization method.

\subsection{Optimization with Oracle Knowledge}

Here we utilize the theory introduced in the previous section, to apply spectral analysis for the linearized system \eqref{eqn:linearized_system}, derive analytical expression for the expectation of the energy consumption defined in \eqref{eqn:w_definition}, and formulate an optimization problem to determine energy-optimal controller parameters.
We achieve these results under the following assumption.

\begin{assumption}\label{assum:Gaussian}
The inputs ${\tilde{v}_i, i\in \cali}$ are WSS, mean-square differentiable Gaussian processes with zero mean.
\end{assumption}

For linearized system~\eqref{eqn:linearized_system}, the nonlinear physical effects $f(v)$ and saturation $\sat(\cdot)$ are dismissed. Therefore we consider the surrogate energy consumption model
\begin{equation}\label{eqn:linear_energy_model}
\bar{w} = \int_{t_0}^{t_{\mathrm{f}}} v(t) g\big(\dot{v}(t)\big)\mathrm{d}t\, ,
\end{equation}
cf.~\eqref{eqn:w_definition}. Then, the following theorem provides analytical expression for the energy consumption.

\begin{theorem}\label{thm:energy}
\textit{
Consider the linearized dynamics~\eqref{eqn:linearized_system} around equilibrium~\eqref{eq:equilibrium} under Assumption~\ref{assum:Gaussian}.
The expectation of the energy consumption defined in \eqref{eqn:linear_energy_model} is given by
\begin{equation}
\bbe[\bar{w}] = (t_{\rm f} - t_0)\frac{v^* }{\sqrt{2\pi}}\vartheta\,,
\end{equation}
where
\begin{equation}\label{eq:theta2}
\vartheta^2 = \frac{1}{\pi} \sum_{i,j\in \cali}\int_{0}^{\infty}\omega^2 T_i(\mj \omega)T_j^*(\mj \omega)S_{\tilde{v}_i\tilde{v}_j}(\omega)\md \omega\,.
\end{equation}
represents the variance of $\dot{v}$, which is equal to the variance of $\dot{\tilde{v}}$.
}
\end{theorem}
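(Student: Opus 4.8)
The plan is to exploit the Gaussian structure to collapse the energy expectation into a single half-normal moment, and then to recover the required variance from the spectral machinery of Theorems~\ref{thm:psd_LTI} and~\ref{thm:csd_LTI}. First I would write $v(t) = v^* + \tilde{v}(t)$, so that $\dot{v}(t) = \dot{\tilde{v}}(t)$ and hence $g(\dot{v}(t)) = g(\dot{\tilde{v}}(t))$ and $\var(\dot{v}) = \var(\dot{\tilde{v}})$. Exchanging expectation and integration in~\eqref{eqn:linear_energy_model} (justified below) and expanding $v(t)$ gives
\begin{equation}
\bbe[\bar{w}] = \int_{t_0}^{t_{\mathrm{f}}} \Big( v^* \bbe\big[g(\dot{\tilde{v}}(t))\big] + \bbe\big[\tilde{v}(t)\, g(\dot{\tilde{v}}(t))\big] \Big)\,\md t \, .
\end{equation}

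The key structural observation is that the cross term vanishes, and this is the step I expect to carry the most weight: it is where Assumption~\ref{assum:Gaussian} becomes essential rather than cosmetic. Since $\tilde{v}$ is the output of the linear system~\eqref{eqn:tf_overall_sum} driven by Gaussian inputs it is itself Gaussian, and therefore so is $\dot{\tilde{v}}$; moreover the two are jointly WSS. Consequently the pair $\big(\tilde{v}(t),\dot{\tilde{v}}(t)\big)$ is jointly Gaussian with zero mean, and by property~(d) of the derivative correlations its instantaneous cross-correlation $R_{\dot{\tilde{v}}\tilde{v}}(0)$ is zero. For jointly Gaussian variables zero covariance forces independence, so $\tilde{v}(t)$ and $\dot{\tilde{v}}(t)$ are independent and
\begin{equation}
\bbe\big[\tilde{v}(t)\, g(\dot{\tilde{v}}(t))\big] = \bbe[\tilde{v}(t)]\,\bbe\big[g(\dot{\tilde{v}}(t))\big] = 0\,,
\end{equation}
because $\mu_{\tilde{v}}=0$ by Theorem~\ref{thm:psd_LTI}. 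The delicate point here is that decorrelation alone would not suffice; it is the Gaussian hypothesis that upgrades it to genuine independence.

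It then remains to evaluate the surviving term. Writing $\vartheta^2$ for the variance of $\dot{\tilde{v}}(t)$, the derivative is a zero-mean $\caln(0,\vartheta^2)$ variable, and a direct half-normal computation gives $\bbe[g(\dot{\tilde{v}}(t))] = \bbe[\max\{\dot{\tilde{v}}(t),0\}] = \vartheta/\sqrt{2\pi}$. Because $\big(\tilde{v},\dot{\tilde{v}}\big)$ is jointly WSS and Gaussian, hence SSS by the Gaussian-process properties, this quantity is independent of $t$; integrating the constant integrand over $[t_0,t_{\mathrm{f}}]$ yields $\bbe[\bar{w}] = (t_{\mathrm{f}}-t_0)\,v^*\vartheta/\sqrt{2\pi}$, as claimed.

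Finally I would identify $\vartheta^2$ with the stated integral. Differentiation in~\eqref{eqn:tf_overall_sum} amounts to passing each input $\tilde{v}_i$ through the transfer function $sT_i(s)$, so that $\dot{\tilde{v}}$ is the sum over $i\in\cali$ of the outputs of $\tilde{v}_i$ through $\mj\omega\,T_i(\mj\omega)$. Applying Theorem~\ref{thm:csd_LTI} to the pair of channels $G_1(s)=sT_i(s)$ and $G_2(s)=sT_j(s)$, and using $(\mj\omega)(\mj\omega)^*=\omega^2$, the cross spectrum of the $i$th and $j$th summands is $\omega^2\,T_i(\mj\omega)T_j^*(\mj\omega)S_{\tilde{v}_i\tilde{v}_j}(\omega)$; summing over all pairs gives $S_{\dot{\tilde{v}}\dot{\tilde{v}}}(\omega)$. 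The variance then follows from the inverse Fourier transform evaluated at $\tau=0$,
\begin{equation}
\vartheta^2 = R_{\dot{\tilde{v}}\dot{\tilde{v}}}(0) = \frac{1}{2\pi}\int_{-\infty}^{\infty} S_{\dot{\tilde{v}}\dot{\tilde{v}}}(\omega)\,\md\omega\,,
\end{equation}
which reduces to the one-sided integral in~\eqref{eq:theta2} after invoking the conjugate symmetry of the summed integrand about $\omega=0$. The only technical items requiring care are the Fubini exchange and the convergence of this spectral integral, both of which are secured by the mean-square differentiability assumed in Assumption~\ref{assum:Gaussian}.
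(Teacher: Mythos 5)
Your proof is correct and follows essentially the same route as the paper's: both hinge on the zero cross-correlation $R_{\tilde{v}\dot{\tilde{v}}}(0)=0$ plus Gaussianity to factor the expectation (the paper writes the product-form joint density $p(v,\dot v)$ explicitly, whereas you split off the cross term and invoke independence directly), the half-normal moment $\bbe[\max\{\dot{\tilde v},0\}]=\vartheta/\sqrt{2\pi}$, and Theorem~\ref{thm:csd_LTI} to express $\vartheta^2$ as the one-sided spectral integral. Your derivation of $S_{\dot{\tilde v}\dot{\tilde v}}$ via the channels $sT_i(s)$ rather than via $-\tfrac{\md^2}{\md\tau^2}R_{\tilde v\tilde v}$ is only a cosmetic difference, and your remark that decorrelation alone would not suffice correctly identifies where the Gaussian assumption is load-bearing.
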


\begin{proof}
Under Assumption~\ref{assum:Gaussian}, the output $\tilde{v}$ is a WSS Gaussian process with zero mean. According to \eqref{eqn:tf_overall_sum}, the output signal $\tilde{v}$ can be decomposed into response $\eta_i$ to each input signal $\tilde{v}_i$:
\begin{equation}
\tilde{v}(t) = \sum_{i\in \cali} \eta_i(t)\, . 
\end{equation}

In time domain, we have
\begin{equation}\label{eqn:rvv}
\begin{split}
R_{\tilde{v}\tilde{v}}(\tau) &= \bbe[\tilde{v}(t)\tilde{v}(t+\tau)] \\
& = \sum_{i,j\in \cali} \bbe[\eta_i (t) \eta_j(t+\tau)] \\
& = \sum_{i,j\in \cali} R_{\eta_i \eta_j}(\tau) \, . \\
\end{split}
\end{equation}
Taking the Fourier transform and noting that ${\mathcal{F}[R_{\tilde{v}\tilde{v}}(\tau)]=S_{\tilde{v}\tilde{v}}(\omega)}$ and ${\mathcal{F}[R_{\tilde{\eta}_i\tilde{\eta}_j}]=S_{\tilde{\eta}_i \tilde{\eta}_j}(\omega)}$, we obtain
\begin{equation}
\begin{aligned}
S_{\tilde{v}\tilde{v}}(\omega) & = \sum_{i,j\in \cali} S_{\eta_i \eta_j}(\omega) \\
& = \sum_{i,j\in \cali} T_i(\mj\omega) T^*_j(\mj\omega)S_{\tilde{v}_i\tilde{v}_j}(\omega) \, ,
\end{aligned}
\end{equation}
where in the last step we used Theorem~\ref{thm:csd_LTI}. Note that when ${i=j}$, we have ${S_{\eta_i \eta_j} = |T_{ij}(\mj \omega)|^2 S_{\tilde{v}_i\tilde{v}_j}(\omega)}$.

The speed perturbation of the truck $\tilde{v}$ as well as its derivative $\dot{\tilde{v}}$ are WSS Gaussian processes with zero mean.
Let us consider the second moments
\begin{equation}
\varsigma^2 = R_{\tilde{v}\tilde{v}}(0)\, ,\quad \vartheta^2 = R_{\dot{\tilde{v}}\dot{\tilde{v}}}(0)\, .
\end{equation}
Since for WSS process ${R_{\dot{X}\dot{X}}(\tau) = -\frac{\md^2}{\md \tau^2}R_{XX}(\tau)}$, we can express the variance of $\dot{\tilde{v}}$ as
\begin{equation}
\begin{split}
\vartheta^2 & = R_{\dot{\tilde{v}}\dot{\tilde{v}}}(0)\\
& =\left.-\frac{\md^2}{\md \tau^2} R_{\tilde{v}\tilde{v}}(\tau) \right|_{\tau = 0}\\
& = \calf^{-1}[\omega^2 S_{\tilde{v}\tilde{v}}(\omega)]|_{\tau=0}\\
& = \left.\frac{1}{2\pi}\int_{-\infty}^{\infty} \omega^2 S_{\tilde{v}\tilde{v}}(\omega) e^{\mj\omega\tau}\md \omega\right|_{\tau=0} \, .\\
\end{split}
\end{equation}
Since $S_{\tilde{v}\tilde{v}}(\omega) = S_{\tilde{v}\tilde{v}}(-\omega)$, we have
\begin{equation}
\begin{aligned}
\vartheta^2 &= \frac{1}{\pi} \int_{0}^{\infty}\omega^2 S_{\tilde{v}\tilde{v}}(\omega)\md\omega\\
&=\frac{1}{\pi} \sum_{i,j\in \cali}\int_{0}^{\infty}\omega^2 T_i(\mj \omega)T_j^*(\mj \omega)S_{\tilde{v}_i\tilde{v}_j}(\omega)\md \omega.
\end{aligned}
\end{equation}
Thus, considering $R_{\tilde{v}\dot{\tilde{v}}}(0)=0$, we can write down the joint distribution of $v = \tilde{v} + v^*$ and $\dot{v} = \dot{\tilde{v}}$ as follows
\begin{equation}\label{eqn:p_v_vdot}
p(v, \dot{{v}}) = \frac{1}{2\pi \varsigma \vartheta}\exp\left(-\frac{(v-v^*)^2}{2\varsigma^2}-\frac{\dot{v}^2}{2\vartheta^2}\right)\, .
\end{equation}

From SSS assumption, the distributions of $v(t)$ and $\dot{v}(t)$ are time-invariant. The mean value of the energy consumption $\bar{w}$ defined in~\eqref{eqn:linear_energy_model} can be calculated as
\begin{align}
\bbe[\bar{w}] &= \int_{t_0}^{t_{\rm f}}\md t \int_{-\infty}^{\infty}\md v \int_{-\infty}^{\infty} v g(\dot{v})p(v, \dot{v})\md \dot{v} \nonumber
\\
& = (t_{\rm f} - t_0) \int_{-\infty}^{\infty}\md v \int_{0}^{\infty} v \dot{v} p(v, \dot{v})\md \dot{v} 
\\
&= (t_{\rm f} - t_0) \frac{1}{2\pi \varsigma \vartheta}\left[\int_{-\infty}^{\infty}v\exp\left(-\frac{(v-v^*)^2}{2\varsigma^2}\right)\md v\right] \nonumber
\\
&\qquad\qquad\qquad\, \times \left[ \int_{0}^{\infty} \dot{v} \exp\left(-\frac{\dot{v}^2}{2\vartheta^2}\right)\md \dot{v}\right] \nonumber
\\
& = (t_{\rm f} - t_0)\frac{v^* }{\sqrt{2\pi}}\vartheta \, . \nonumber
\end{align}
This completes the proof.
\end{proof}

As consequence of Theorem~\ref{thm:energy}, parameters that minimize $\vartheta^2$, the variance of $\dot{v}$, also minimize the average energy consumption. Note that in the optimal control literature~\cite{kirk2004optimal}, people often use the square magnitude of acceleration as cost function, and the corresponding control problem is referred to as ``energy optimal control". Here, for the first time, we proved that this quantity is proportional to the square root of average energy consumption, when the energy is defined in \eqref{eqn:linear_energy_model}. 

In this paper, we fix ${\alpha=0.4~[1/\mathrm{s}]}$ for safety considerations~\cite{ge2018experimental}, and search for the optimal parameters ${(\beta_i, \sigma_i), {i \in \cali}}$.
We find their values by solving an optimization problem summarized in the following corollary of Theorem~\ref{thm:energy}. Note that the results can be easily extended to include $\alpha$.
\begin{corollary}
\textit{
Consider the linearized dynamics~\eqref{eqn:linearized_system} around equilibrium~\eqref{eq:equilibrium} under Assumption~\ref{assum:Gaussian}.
The optimal values of the control parameters $(\beta_i, \sigma_i), {i \in \cali}$ that minimize the expectation of the energy consumption \eqref{eqn:linear_energy_model} while maintaining plant stability can be found by solving the optimization problem
\begin{align}\label{eqn:optimization_continuous}
\underset{(\beta_i, \sigma_i)}{\min} J= & \sum_{i,j\in \cali}\int_{0}^{\infty}\omega^2 T_i(\mj \omega) T_j^*(\mj \omega) S_{\tilde{v}_i \tilde{v}_j}(\omega)\md \omega \nonumber~, \\
\mathrm{s.t.}\quad & (\beta_i) \in \Omega~,
\end{align}
where $\Omega$ is the plant stable region defined in \eqref{eq:plant_stable}.
}
\end{corollary}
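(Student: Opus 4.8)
The plan is to obtain the corollary directly from Theorem~\ref{thm:energy} by recognizing the stated optimization problem as a monotone reparametrization of the energy objective, so that the two problems share the same set of minimizers over the feasible region. The point of departure is the closed-form expression $\bbe[\bar{w}] = (t_{\rm f}-t_0)\,\frac{v^*}{\sqrt{2\pi}}\,\vartheta$ supplied by the theorem. Here the prefactor $(t_{\rm f}-t_0)\,\frac{v^*}{\sqrt{2\pi}}$ is a strictly positive constant that is independent of the decision variables $(\beta_i,\sigma_i)$: the horizon $t_{\rm f}-t_0>0$ is fixed a priori, and the equilibrium speed $v^*=V(h^*)>0$ is set by the chosen operating point rather than by the controller gains. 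Isolating this observation at the outset is what makes the rest of the argument go through.

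First I would establish a chain of equivalences between minimization problems. Because the prefactor is positive and constant, minimizing $\bbe[\bar{w}]$ over any feasible set has the same argmin as minimizing $\vartheta$. Since $\vartheta\ge 0$ is a standard deviation and $x\mapsto x^2$ is strictly increasing on $[0,\infty)$, minimizing $\vartheta$ is equivalent to minimizing $\vartheta^2$. Finally, the identity \eqref{eq:theta2} reads $\vartheta^2=\frac{1}{\pi}J$, where $J$ is precisely the objective displayed in the corollary; dividing by the positive constant $\pi$ again preserves the argmin, so minimizing $\vartheta^2$ is equivalent to minimizing $J$. Composing the three steps shows that the energy-optimal gains and delays coincide with the minimizers of $J$. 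Throughout I would keep track that the dependence of $J$ on $(\beta_i,\sigma_i)$ enters only through the link transfer functions $T_i(\mj\omega)$ of \eqref{eqn:tf_link} and the characteristic function $\cald(\mj\omega)$ of \eqref{eqn:char_eq}, while the cross spectral densities $S_{\tilde{v}_i\tilde{v}_j}(\omega)$ are treated as fixed oracle data.

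It then remains to match the feasible sets. The stated constraint $(\beta_i)\in\Omega$ is exactly the plant-stable region \eqref{eq:plant_stable}, which restricts only $\alpha$ and the aggregate $\sum_{i\in\cali}\beta_i$ and---as already noted after \eqref{eq:plant_stable}---does not involve the delays $\sigma_i$; hence leaving the $\sigma_i$ free in the optimization is consistent with imposing plant stability. I do not anticipate a substantive obstacle here, since the result is a monotone relabeling of Theorem~\ref{thm:energy} rather than a new estimate. The only points demanding care are the strict positivity and parameter-independence of the prefactor, which guarantee that each transformation preserves the minimizer, and the absence of $\sigma_i$ from the stability constraint; everything else is bookkeeping of which quantities are held fixed versus optimized.
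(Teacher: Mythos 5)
Your argument is correct and matches the paper's own reasoning: the corollary is obtained exactly as you describe, by noting (immediately after Theorem~\ref{thm:energy}) that the positive prefactor $(t_{\rm f}-t_0)\,v^*/\sqrt{2\pi}$ is independent of $(\beta_i,\sigma_i)$, so minimizing $\bbe[\bar{w}]$ is equivalent to minimizing $\vartheta^2=\frac{1}{\pi}J$ over the plant-stable region. Your additional care about the delays $\sigma_i$ not appearing in the stability constraint is consistent with the paper's remark following \eqref{eq:plant_stable}.
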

Power spectral densities of speed perturbations of preceding vehicles are included in the objective function of \eqref{eqn:optimization_continuous} when ${i=j}$. Furthermore, the cross power spectral densities in the objective function capture the correlations between the speed perturbation signals of the preceding vehicles when ${i\neq j}$. Such correlations are especially significant in dense traffic where the motions of subsequent vehicles are typically strongly coupled.
For example, Newell's car-following model considers the speed of a following vehicle as delayed copy of the vehicles ahead~\cite{newell2002car-following}. Computing the cross power spectral density allows us to capture and utilize such strong correlations to improve energy efficiency.

\subsection{Data-driven Optimization}

In practice, the true value of the cross power spectral density $S_{\tilde{v}_i\tilde{v}_j}(\omega)$ is unknown. Instead, we need to estimate it from finite amount of data sampled in discrete time. In this paper, we utilize two estimators: periodogram and Welch's method~\cite{shumway2017time}. 

Consider observation data of two velocity signals $\{v_i(t_k)\}_{k=0}^{N-1}$ and $\{v_j(t_k)\}_{k=0}^{N-1}$ for time instances ${t_k = k\Delta t}$. First, we subtract from each term their sample mean to get the centralized data $\{\tilde{v}_i(t_k)\}_{k=0}^{N-1}$ and $\{\tilde{v}_j(t_k)\}_{k=0}^{N-1}$.
Let $\{\tilde{V}_i(\omega_k)\}_{k=0}^{N-1}$ and $\{\tilde{V}_j(\omega_k)\}_{k=0}^{N-1}$, $\omega_k =\frac{2\pi k}{N\Delta t}$ be the discrete-time Fourier transforms of $\{\tilde{v}_i(t_k)\}_{k=0}^{N-1}$ and $\{\tilde{v}_j(t_k)\}_{k=0}^{N-1}$, respectively. The periodogram method estimates the cross-spectral density with
\begin{equation}\label{eqn:periodogram:cpsd}
\hat{S}_{\tilde{v}_i \tilde{v}_j} (\omega_k) = \frac{2\Delta t}{N} \tilde{V}_i(\omega_k) \tilde{V}_j^*(\omega_k)\, .
\end{equation}
When ${i=j}$, it reduces to the one-sided estimator of power spectral density
\begin{equation}\label{eqn:periodogram_psd}
\hat{S}_{\tilde{v}_i \tilde{v}_i}\left(\omega_k\right) = \frac{2\Delta t}{N}\big|\tilde{V}_i (\omega_k)\big|^2 \, .
\end{equation}

The periodogram estimator is asymptotically unbiased, but suffers from high variance. We can apply the following methods to reduce the variance. In the time domain, we can split the original signals into segments, calculate the periodogram of each segment, then average them for each frequency. In the frequency domain, we can apply window functions, such as Hamming window, and calculate the periodogram for windowed signals. Welch's method~\cite{welch_method} combines the two solutions together. First we split each original signal into overlapping segments, with 50\% overlap ratio. Then we apply window function to each segment and calculate the periodogram for each windowed segment. Finally, we average over all the periodograms for each frequency. Welch's method can achieve lower variance, but at the cost of frequency resolution. We will see later in Section \ref{sec:choice_estimator} that the variance reduction property can help us to obtain better controller gains $\beta_i$, but the low resolution poses limitations on getting appropriate information delay $\sigma_i$.

It is straightforward to replace the power spectral density $S_{\tilde{v}_i \tilde{v}_j}$ in \eqref{eqn:optimization_continuous} with periodogram estimator, and rewrite the optimization problem for discrete-time observations:
\begin{equation}\label{eqn:optimization_periodogram}
\begin{aligned}
\underset{(\beta_i, \sigma_i)}{\min} & J \approx \frac{2\Delta t}{N}  \sum_{k=0}^{N-1} \sum_{i,j\in \cali}\omega_k^2 T_i(\mj \omega_k) T_j^*(\mj\omega_k)\tilde{V}_i(\omega_k)\tilde{V}_j^*(\omega_k)\\
& \;\; = \frac{2\Delta t}{N} \sum_{k=0}^{N-1}\omega_k^2\left|\sum_{i\in \cali} T_i(\mj \omega_k)\tilde{V}_i( \omega_k) \right|^2 \\
& \;\; = \frac{2\Delta t}{N} \sum_{k=0}^{N-1}\omega_k^2 \left|\tilde{V}(\omega_k)\right|^2 \\
\mathrm{s.t.}\; & (\beta_i) \in \Omega.
\end{aligned}
\end{equation}

The optimization problem is similar for Welch's method. We only need to substitute the cross power spectral density $S_{\tilde{v}_i \tilde{v}_j}(\omega)$ in \eqref{eqn:optimization_continuous} with estimation results from Welch's method. Notice that we do not need to know the equilibrium velocity $v^*$ in our data-driven method.

We remark that by choosing the periodogram as the spectral estimator, we recover the optimization framework in our previous work~\cite{he2019fuel}, where the energy optimal controller parameters are selected by minimizing ${\sum_{k=1}^{N-1} \omega_k^2 \big|\tilde{V}(\omega_k)\big|^2}$, which is identical to our optimization problem \eqref{eqn:optimization_periodogram} if we choose the periodogram as spectral estimator. 
The method presented in this paper gives solid theoretical justifications and extend the framework to allow further improvement by choosing a better spectral estimator, e.g., Welch's method.

\section{Numerical Results}\label{sec:results}

In this section, we evaluate the optimization method proposed in the previous section using both synthetic data and real traffic data. With synthetic data, we have access to the underlying ground truth distribution of trajectories, which enables us to verify our theory. With real traffic data, we can show the potential of applying the proposed method in the real world.
The evaluation scenarios include adaptive cruise control and connected cruise control with CHVs ahead of the CAT in the traffic.
To showcase the potential in energy saving of the proposed design for lean penetration of connectivity, we consider the scenario where the truck is connected to a CHV $L$ vehicles ahead.

\subsection{Simulation Dataset}\label{sec:dataset}

We consider two kinds of traffic data: driving in free-flow conditions and in traffic congestion~\cite{ge2018experimental}. These are shown in Fig.~\ref{fig:traj_real_synthetic}(a) and (b) respectively. In the second case, the leading vehicle frequently makes mild brakes, and the following vehicles have increasingly harsh brakes because of the string instability of human drivers~\cite{Feng2019, Ploeg2014}. This string instability implies that, speed fluctuations of vehicles in the distance may be milder than those of vehicles closer to the truck. With V2X connectivity, the CAT may not only respond the immediate preceding vehicle, but also the vehicles far in the distance. On the other hand, the observed phase-lags between the braking events motivate us to introduce the additional delays $\sigma_i$ in \eqref{eqn:ccc_ad}. That is, it takes time for the behavior of preceding vehicles to affect the CAT, therefore, it shall ``wait" before responding to vehicles far in the distance.

We generate synthetic speed trajectories for the preceding vehicles according to the stochastic modeling assumptions in Section~\ref{sec:stochastic_modeling}. In particular, we first generate the speed profile of vehicle $L$, and then simulate the following vehicles using a car-following model. According to Assumption~\ref{assum:Gaussian} in Section \ref{sec:stochastic_modeling}, the stochastic process ${\tilde{v}_L(t) = v_L(t) - v^*}$ is a mean zero differentiable Gaussian process. In this paper, we choose Mat\'ern kernel~\cite{Rasmussen2006gpml}
\begin{equation}\label{eq:matern_kernel}
R_{\tilde{v}_L\tilde{v}_L}(\tau) = C^2 \frac{2^{1-\nu}}{\Gamma(\nu)}\left(\sqrt{2\nu}\frac{\tau}{\rho}\right)^{\nu} K_{\nu}\left(\sqrt{2\nu} \frac{\tau}{\rho}\right)\,,
\end{equation}
for this Gaussian process, where $\Gamma$ is the Gamma function, $K_\nu$ is the modified Bessel function of the second kind, $C$, $\rho$ and $\nu$ are positive parameters. In our simulations, we choose ${v^* = 25~[{\rm m}/{\rm s}]}$, ${C=1}$, ${\rho = 5}$, and ${\nu=\frac{5}{2}}$. It can be proven that this process is second-order mean square differentiable~\cite{Rasmussen2006gpml}.

To generate speed profiles for the following vehicles we use the optimal velocity model (OVM) for human driver behavior:
\begin{equation}\label{eqn:human_driver_model}
\begin{split}
\dot{h}_i(t) &= v_{i+1}(t) - v_i(t) \\
\dot{v}_i(t) &= \alpha_\mh(V(h_i(t-\sigma_\mh)) - v_i(t-\sigma_\mh)) \\
& + \beta_\mh(W(v_{i+1}(t - \sigma_\mh))-v_i(t-\sigma_\mh)),
\end{split}
\end{equation}
for all ${i = 1,\ldots, L-1}$, where the range policy is given by
\begin{equation}\label{eqn:Vhfun_def}
V_{\mh}(h) =\left\{\begin{matrix*}[l]
0 & \mathrm{if} \quad h\leq h_{\mathrm{st}} \, ,
\\
\kappa_{\mh}(h-h_{\mathrm{st}}) & \mathrm{if} \quad h_{\mathrm{st}}< h < h_{\mathrm{go}} \, ,
\\
v_{\max} & \mathrm{if} \quad h\geq h_{\mathrm{go}} \, ,
\end{matrix*}\right.
\end{equation}
and $W$ is defined in \eqref{eqn:Wfun_def}. We choose parameters ${\alpha_\mh = 0.2~[1/{\rm s}]}$, ${\beta_\mh=0.8~[1/{\rm s}]}$, ${\kappa_\mh=1.0~[1/{\rm s}]}$, ${\sigma_\mh = 1.0~[{\rm s}]}$, and ${L=8}$. Fig.~\ref{fig:traj_real_synthetic}(c) shows the corresponding synthetic speed trajectories.

\begin{figure}[t]
    \centering
    \includegraphics[width=0.48\textwidth]{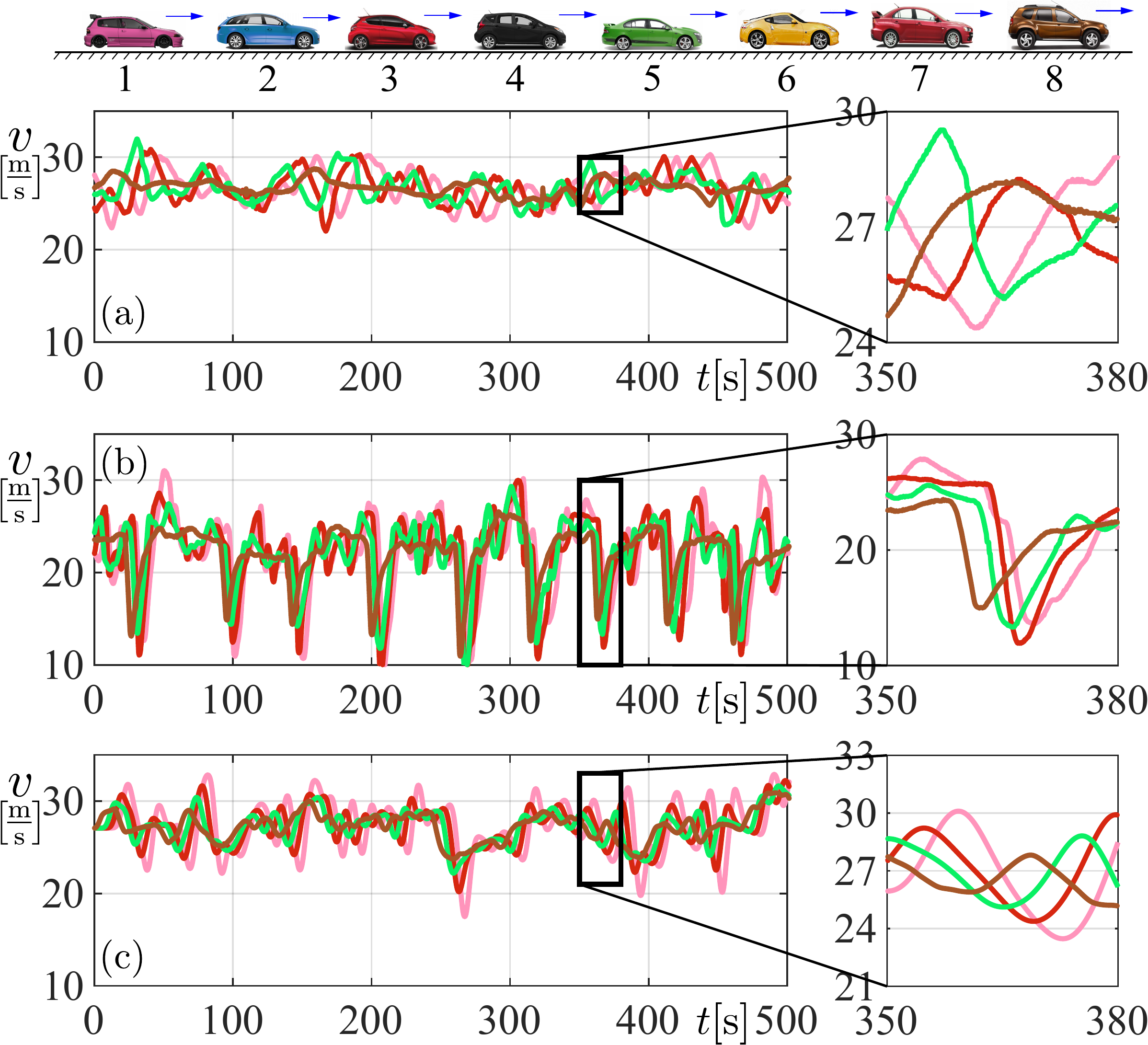}
    \caption{Velocity profiles for vehicles $8$ (brown), $5$ (green), $3$ (red), $1$ (pink). (a) Experimental velocity data in free-flow traffic condition. (b) Experimental velocity data in congestion. (c) Synthetic data.}
    \label{fig:traj_real_synthetic}
\end{figure}

\subsection{Benefits of Connectivity}\label{sec:benefit_connectivity}

In this section, we compare the energy consumption for scenarios with and without connectivity in the traffic. Based on the synthetic dataset introduced in the previous section, we apply a \emph{cross evaluation} method to evaluate the proposed controllers. This consists of two steps: observation and testing. In observation step, we observe the speed profile of the preceding vehicle, estimate the spectral density, and solve optimization problem \eqref{eqn:optimization_continuous} to get the optimal controller parameters. In testing step, we simulate the truck for different preceding vehicle speed profiles, using the optimal parameters calculated in the observation step. The speed profile of the preceding vehicle in testing step shall have the same distribution as that in observation step. In particular, for synthetic dataset, we create 101 candidate speed profiles for the preceding vehicle according to Section \ref{sec:dataset} and arbitrarily pick one for observation and another one for testing. Therefore there are $101\times 100$ observation-evaluation pairs.

Since the theoretical results are derived based on linearization, we conduct simulations for both linear and nonlinear systems. In the former case, we simulate the linearized system~\eqref{eqn:linearized_system} where nonlinear physical effects $f(v)$ and $\sat(\cdot)$ are dropped. Consequently the energy consumption model~\eqref{eqn:linear_energy_model} is applied for evaluation. On the other hand, when simulating the nonlinear system~\eqref{eqn:longitudinal_dynamics}\eqref{eqn:longitudinal_dynamics2}\eqref{eqn:ccc_ad}, we take into account all the nonlinear effects, and use \eqref{eqn:w_definition} to calculate the energy consumption.

When there is no connected vehicle in the traffic, the truck can only collect information about the motion of the vehicle immediately ahead. We refer to this as adaptive cruise control (ACC). The corresponding energy consumption will serve as a benchmark, and will be compared to the CCC controllers which exploit V2X connectivity. The acceleration command of adaptive cruise control is given by
\begin{equation}\label{eqn:acc_ad}
a_\md(t) = \alpha \big(V(h(t)) - v(t)\big) + \beta_1\big(W(v_1(t) - v(t)\big) \,,
\end{equation}
cf.~\eqref{eqn:ccc_ad}. Thus, \eqref{eqn:tf_overall_sum}\eqref{eqn:tf_link}\eqref{eqn:char_eq} yields
\begin{equation}
V(s) = T(s)V_1(s),\quad T(s) = \frac{\beta_1 s + \alpha \kappa}{s^2e^{s\sigma} + (\alpha+\beta_1)s + \alpha \kappa}\;.
\end{equation}

\begin{figure}[h]
    \centering
    \includegraphics[width=0.44\textwidth]{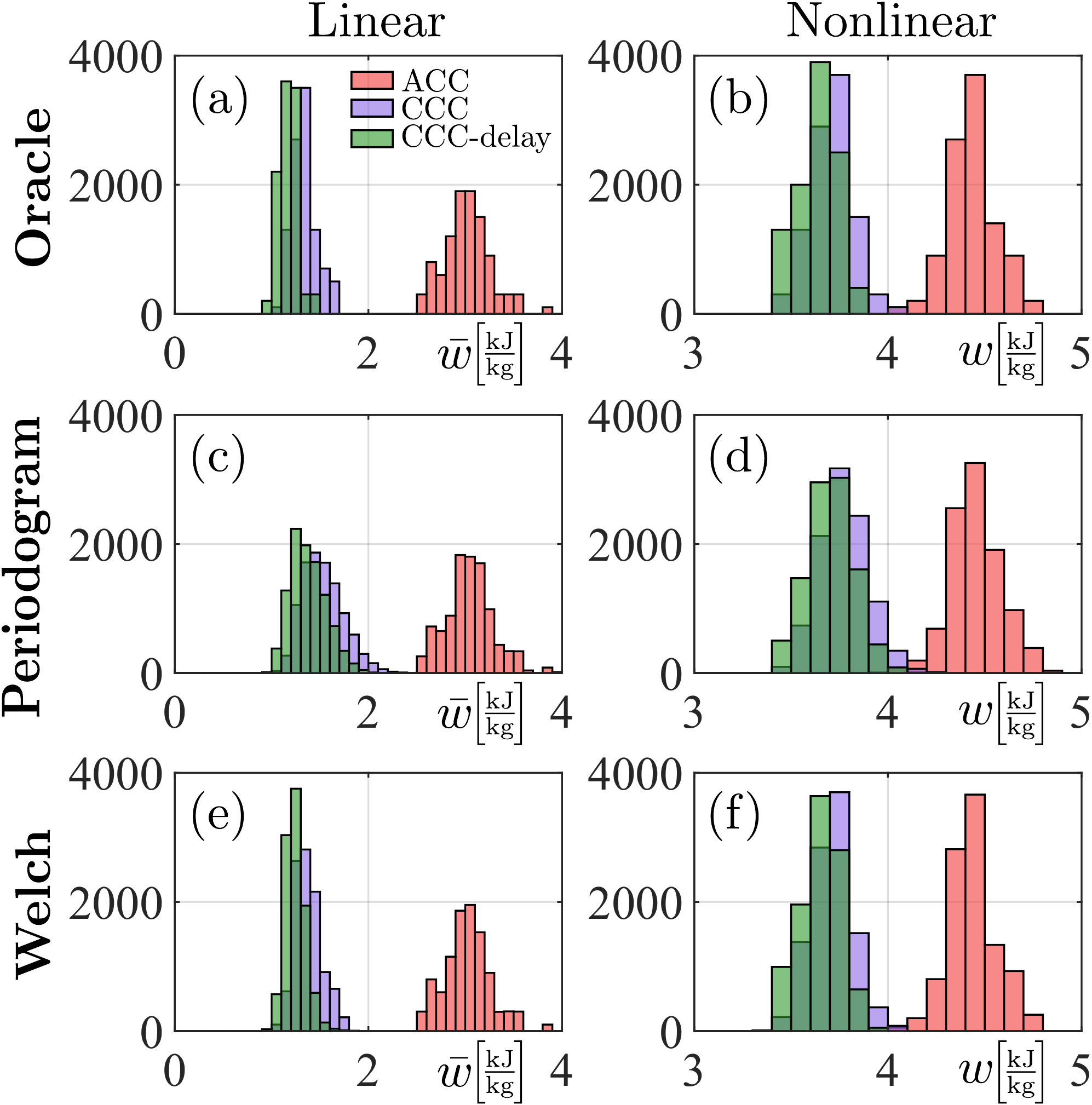}
    \caption{Cross evaluation of energy consumption of ACC, CCC without additional delay and CCC with delay. In observation step, power spectral density is chosen from oracle knowledge~(panels~(a), (b)), periodogram estimator~(panels~(c), (d)) and Welch's method~(panels~(e), (f)), respectively.}
    \label{fig:acc_cnd_ccc_comparison}
\end{figure}

\begin{figure}[h]
\centering
\includegraphics[width=0.55\textwidth]{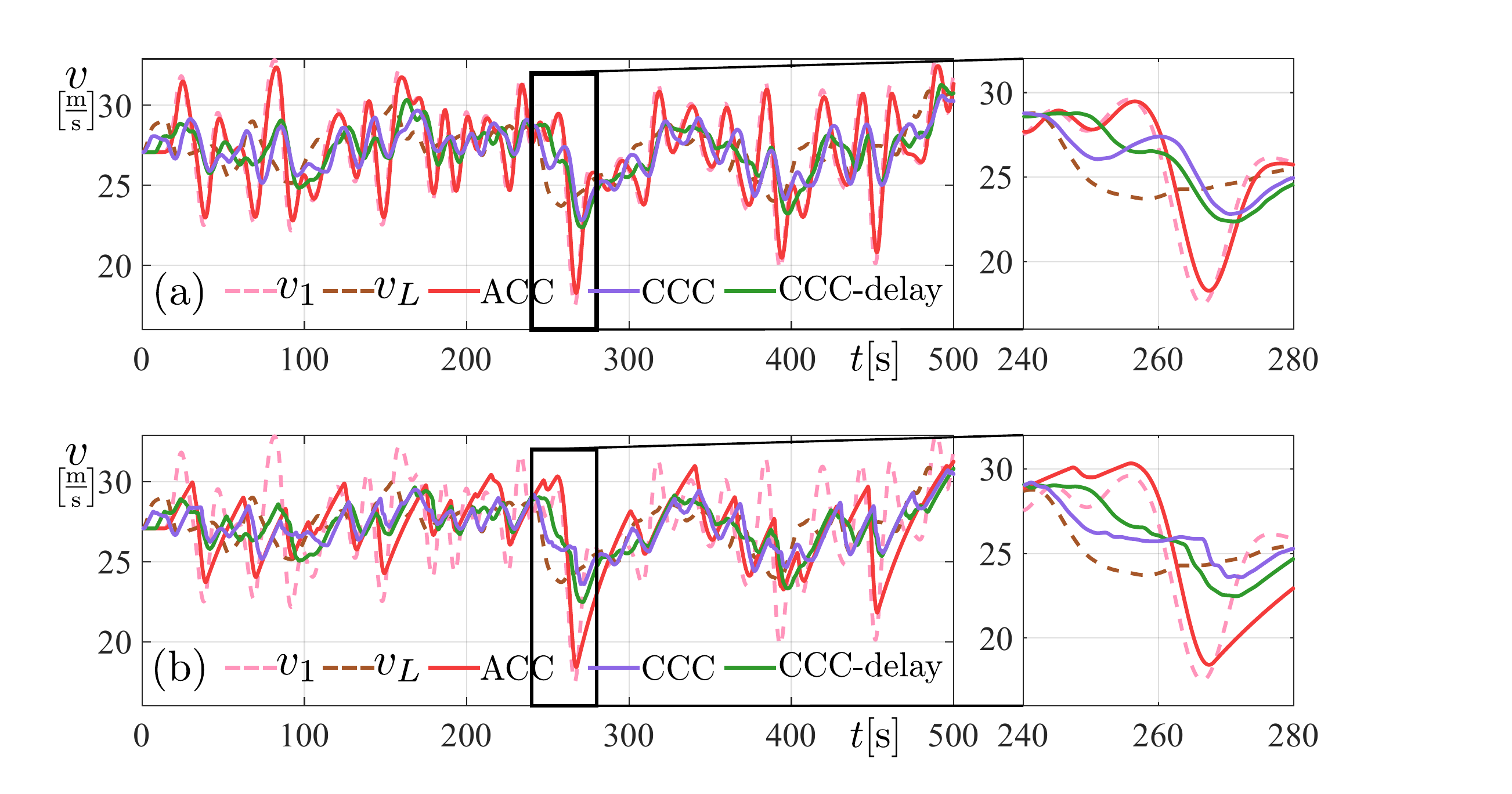}
\caption{Speed profile of the CAT following synthetic trajectory in Fig.~\ref{fig:traj_real_synthetic}(c) using ACC, CCC and CCC with additional delay. We choose vehicle $8$ as leading vehicle~(${L=8}$). Spectral densities are chosen from oracle knowledge. (a) Linear simulation. (b) Nonlinear simulation.}
\label{fig:traj_syn_acc_ccc}
\end{figure}

\begin{table}[h]
\begin{tabular}{cccc}
            & ACC {[}kJ/kg{]} & CCC {[}kJ/kg{]} & CCC-Delay {[}kJ/kg{]} \\ \hline
Oracle      & $3.018$   & $1.331~(-55.89\%)$  & $1.178~(-60.98\%)$ \\
Periodogram & $3.044$   & $1.529~(-49.77\%)$  & $1.378~(-54.72\%)$ \\
Welch       & $3.020$   & $1.379~(-54.35\%)$  & $1.244~(-58.80\%)$                
\end{tabular}
\caption{Average energy consumption $\bar{w}$ in linear simulation.}
\label{tab:avg_energy_lin}
\end{table}

\begin{table}[h]
\begin{tabular}{cccc}
            & ACC {[}kJ/kg{]} & CCC {[}kJ/kg{]} & CCC-Delay {[}kJ/kg{]} \\ \hline
Oracle      & $4.438$  & $3.709~(-16.44\%)$  & $3.635~(-18.11\%)$  \\
Periodogram & $4.453$  & $3.771~(-15.33\%)$  & $3.705~(-16.82\%)$  \\
Welch       & $4.436$  & $3.712~(-16.34\%)$  & $3.653~(-17.66\%)$
\end{tabular}
\caption{Average energy consumption $w$ in nonlinear simulation.}
\label{tab:avg_energy_nonlin}
\end{table}

Fig.~\ref{fig:acc_cnd_ccc_comparison} shows the cross evaluation result for the energy consumption of adaptive cruise control~(ACC), connected cruise control without additional information delay~(CCC), and CCC with delay~(CCC-Delay). In the observation step, power spectral density can be chosen from oracle knowledge~(panel~(a) and (b)), periodogram estimator~(panel~(c) and (d)) and Welch's method~(panel~(e) and (f)), respectively. For both estimators, information from V2X connectivity can bring significant energy reduction compared to ACC, where no V2X connectivity is available. The average energy consumption is compared for the different cases in Tables~\ref{tab:avg_energy_lin} and \ref{tab:avg_energy_nonlin}. In the linear case~(Table~\ref{tab:avg_energy_lin}), connectivity can reduce the energy consumption by half, while in the nonlinear case, at least $15\%$ energy is saved. Note that these savings are achieved by adding a single connected vehicle in the traffic flow. The additional delay leads to additional $5\%$ energy saving in the linear case, and additional $2\%$ saving in the nonlinear case.

To further investigate how connectivity and the additional delay benefits energy consumption, in Fig.~\ref{fig:traj_syn_acc_ccc}, we plot the speed profiles of the CAT executing ACC, CCC, and CCC with additional delay, respectively. We choose synthetic speed trajectories shown in Fig.~\ref{fig:traj_real_synthetic}(c) for preceding vehicles, and vehicle $8$ is the leading vehicle~(${L=8}$). Simulation results of linearized system and nonlinear system are shown in panels~(a) and (b) of Fig.~\ref{fig:traj_syn_acc_ccc}, respectively. The ACC controller closely follows the trajectory of its immediate predecessor. As highlighted by the zoom-ins, there is heavy braking around ${t=260~[\mathrm{s}]}$, which results in heavy braking with the ACC controller. Therefore ACC consumes significant energy. However, with connectivity, the CAT has access to the states of preceding vehicles in the distance. Hence the truck knows that the leading vehicle $L$ brakes around ${t=240~[\mathrm{s}]}$, and using CCC it can brake in advance, creating enough safe margin to avoid heavy braking at ${t=260~[\mathrm{s}]}$. Moreover, since the braking behavior takes around 10 seconds to propagate from vehicle $L$ to $1$, the truck does not need to react immediately to the brake. Instead, we can purposely delay the reaction with a few seconds. Thus, the CCC with additional delay further reduces the speed perturbation.

\subsection{Choice of Spectral Estimator}\label{sec:choice_estimator}

In this section, we show that choosing better spectral estimator can help us get closer to the energy-optimal parameters and reduce the energy consumption. There is a fundamental trade-off between the variance and frequency resolution for spectral estimators~\cite{shumway2017time}. Welch's method has less variance than periodogram, at the cost of lower frequency resolution. We compare these two spectral estimators by simulations using synthetic data described in Section \ref{sec:dataset}.

Figure~\ref{fig:acc_traj_stats} illustrates the performance of the spectral estimators. 
Panel~(a) shows the mean of the sample autocorrelation function, which is in excellent agreement with the oracle \eqref{eq:matern_kernel}.
Panel~(b) compares the power spectral density calculated by the periodogram (orange curve and shading) and Welch's method (green curve and shading) with the oracle power spectral density (purple dashed curve). The latter is obtained as the Fourier transform of \eqref{eq:matern_kernel}. The means of spectral estimators match the oracle very well except at zero frequency. Note that the spectral density at zero frequency does not influence the objective function in \eqref{eqn:optimization_continuous}. The periodogram estimator has higher resolution compared to Welch's method, but the variance is significantly higher. In practice, Welch's method demands more data but, when stationary assumptions hold for a long enough time, this method can bring more precise spectral description.

\begin{figure}
    \centering
    \includegraphics[width=0.44\textwidth]{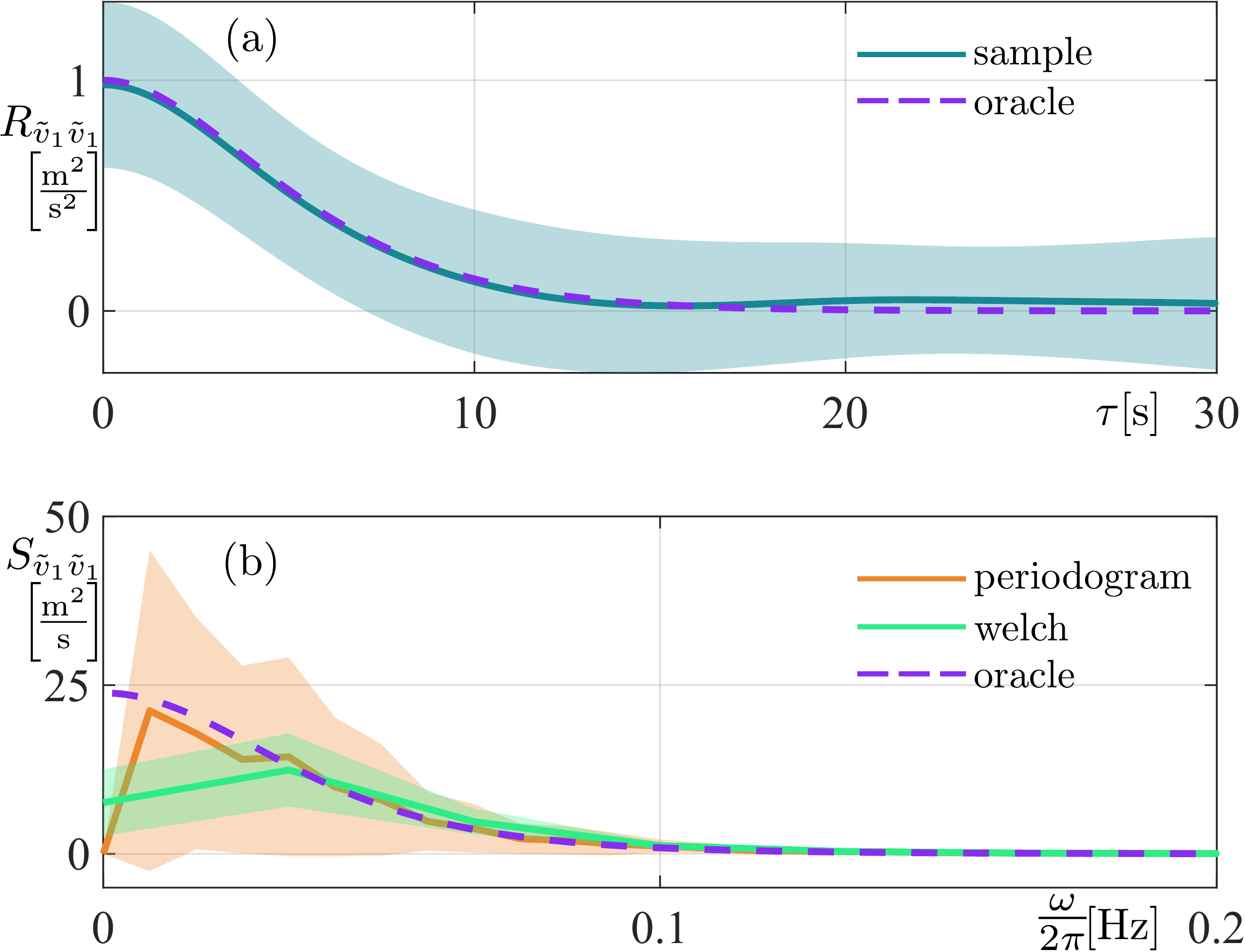}
    \caption{Comparison of the periodogram and Welch's spectral estimators. (a) Sample-based correlation function and the oracle correlation function \eqref{eq:matern_kernel}. (b) Spectral estimations and the oracle power spectral density. Solid curves denote the mean while the standard deviation is indicated by shading.}
    \label{fig:acc_traj_stats}
\end{figure}

Fig.~\ref{fig:synthetic_params} compares the controller parameters chosen from oracle, periodogram and Welch's method, denoted as $\theta^{(\mo)}$, $\theta^{(\mp)}$, and $\theta^{(\mw)}$ respectively, where ${\theta\in \{\beta_1, \beta_L, \sigma_L\}}$. One may observe that Welch's method achieves better concentration around the oracle parameters than the periodogram, and the resulting parameters lie closer to the oracle which is the optimum based on ground truth distribution.

\begin{figure}
    \centering
    \includegraphics[width=0.40\textwidth]{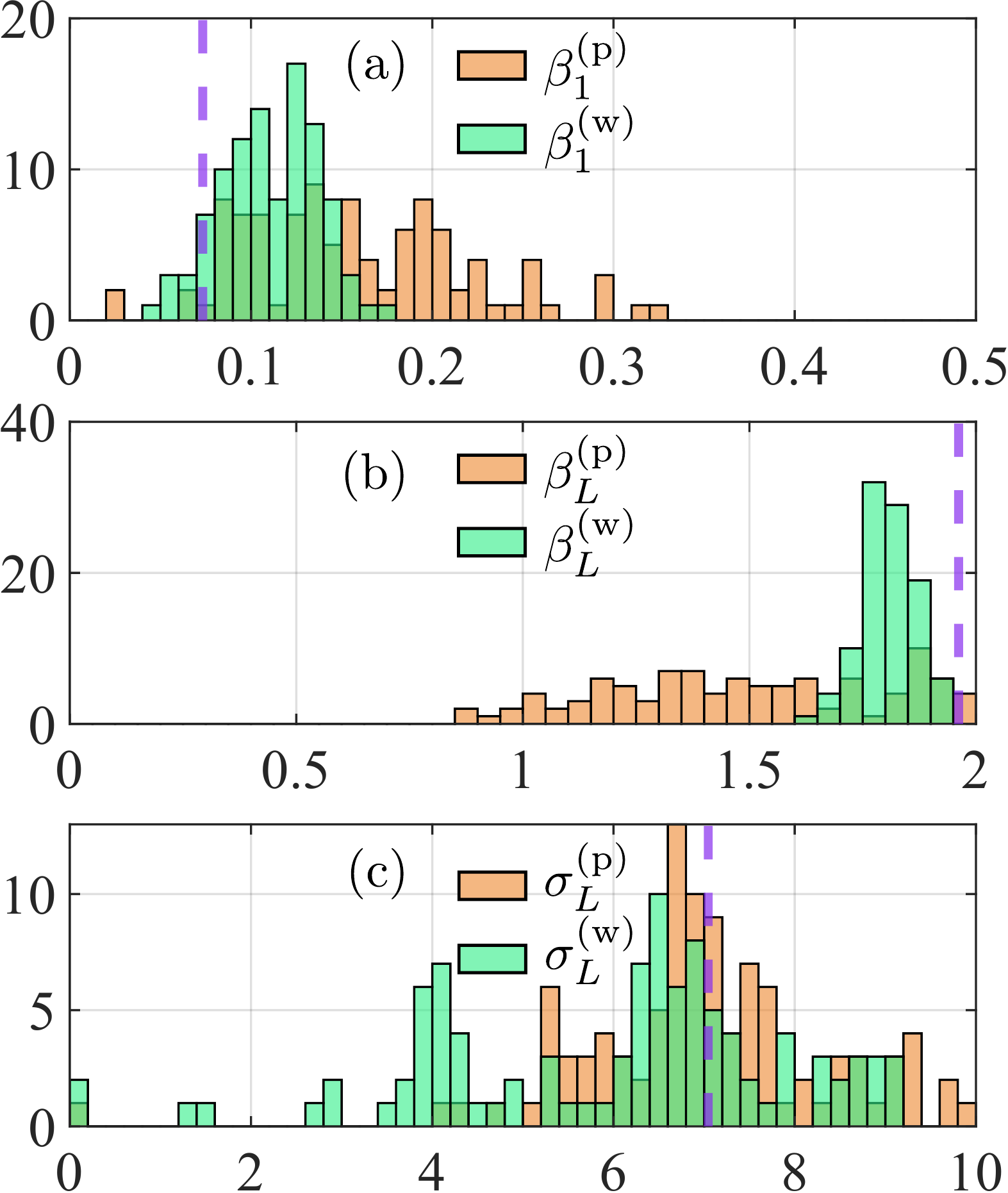}
    \caption{Distribution of controller parameters optimized for energy consumption.
    Orange histogram corresponds to the periodogram estimator while green corresponds to Welch's method. Purple dashed lines correspond to the optimal controller parameters chosen with oracle knowledge of speed distribution.}
    \label{fig:synthetic_params}
\end{figure}

Finally, we compare the energy consumption of the linearized and nonlinear dynamics using parameters chosen from oracle, periodogram and Welch's method. In the linear case, the energy consumption is evaluated using surrogate model~\eqref{eqn:linear_energy_model} which neglects the nonlinear physical model. We denote the corresponding energy consumptions as $\bar{w}^{(\mo)}$, $\bar{w}^{(\mp)}$ and $\bar{w}^{(\mw)}$. In the nonlinear case, the energy consumption is evaluated using~\eqref{eqn:w_definition}, and the corresponding energy consumptions are ${w^{(\mo)}}$, ${w^{(\mp)}}$ and ${w^{(\mw)}}$. To compare the three spectral estimations, we compute the relative energy advantages
\begin{subequations}\label{eq:ws}
\begin{equation}
{\Delta \bar{w}^{(\mathrm{\lozenge\square})} = (\bar{w}^{(\lozenge)} - \bar{w}^{(\square)})/\bar{w}^{(\square)}}~,
\end{equation}
\begin{equation}
{\Delta w^{(\mathrm{\lozenge\square})} = (w^{(\lozenge)} - w^{(\square)})/w^{(\square)}}~,
\end{equation}
\end{subequations}
where ${\lozenge, \square \in \{\mathrm{o, p, w}\}}$. The histograms of 10100 observation-testing pairs are shown in Fig.~\ref{fig:synthetic_pwchoice}. The panels in the left column show linear results, while in the right column we show nonlinear results. In panels (a) and (b), we compare the energy consumptions of the estimators to those of the oracle. The distribution of energy consumption is more concentrated around $0$ for Welch's method, which means in most cases, the parameter given by Welch's method can achieve similar energy consumption as the benchmark oracle parameter. We also compare these two spectral estimators directly in panels (c) and (d). For most of the cases, Welch's method consumes less energy than periodogram as ${\Delta \bar{w}^{(\mathrm{pw})}}$ and ${\Delta w^{(\mathrm{pw})}}$is distributed more towards positive values. On average, the periodogram parameters consume $10.78\%$ more energy than Welch parameters in linear case while $1.42\%$ more energy in nonlinear case.

\begin{figure}
    \centering
    \includegraphics[width=0.48\textwidth]{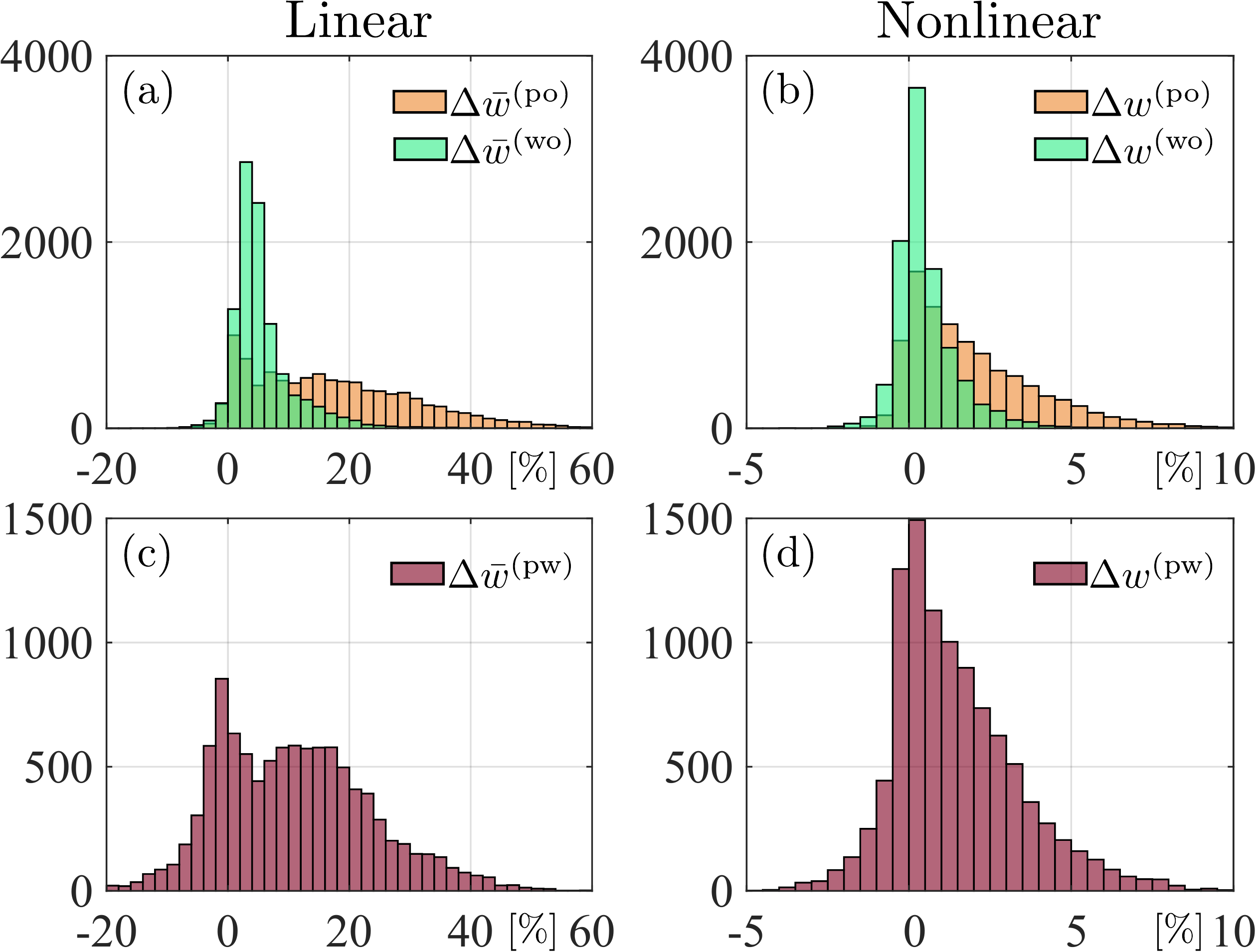}
    \caption{Comparison of energy consumption of the periodogram and Welch's estimator. (a),~(b) Energy consumption compared to the oracle. (c),~(d) Direct comparison of the two spectral estimators. }
    \label{fig:synthetic_pwchoice}
\end{figure}

\subsection{Benefits of Additional Delay}\label{sec:benefit_delay}

Here we quantify the benefits of incorporating the additional delays $\sigma_i$ into the controller~\eqref{eqn:ccc_ad}.
Recall that these delays were introduced based on the following intuition.
Considering lean penetration of connected vehicles, the CAT may connect to vehicles far in the distance. Introducing additional delays enables the CAT to ``wait" until the effect of the distant vehicles' motion propagate closer, and thus, it may achieve a more energy-efficient response.
In this section, we compare the controller with and without additional delay using synthetic data as well as experimental data.

First, we apply the proposed method to find the optimal parameters for the case when ${\sigma_i=0}$. The corresponding energy consumption values using oracle, periodogram and Welch's method are denoted by $\hat{w}^{(\mo)}$, $\hat{w}^{(\mp)}$ and $\hat{w}^{(\mw)}$, respectively. In order to compare this controller with the one with additional time delay, we define the relative energy advantages
\begin{subequations}
\begin{equation}
\Delta \hat{\bar{w}}^{(\lozenge)} = (\hat{\bar{w}}^{(\lozenge)} -  \bar{w}^{(\lozenge)})/\bar{w}^{(\lozenge)}~,
\end{equation}
\begin{equation}
\Delta \hat{w}^{(\lozenge)} = (\hat{w}^{(\lozenge)} -  w^{(\lozenge)})/w^{(\lozenge)}~,
\end{equation}
\end{subequations}
for linear and nonlinear cases, respectively, where ${\lozenge\in \{\mo, \mp, \mw\}}$, cf.~\eqref{eq:ws}. The corresponding histograms are shown in Fig.~\ref{fig:synthetic_delay_compare}. For all methods used, the additional delay brings energy benefits. In the linear case, oracle, periodogram and Welch's method save $11.53\%$, $9.86\%$ and $9.77\%$ energy on average. In the nonlinear case, the average energy benefits are $2.00\%$, $1.76\%$ and $1.59\%$, respectively. Though the nonlinearity in the dynamics impairs the advantage, the difference is still significant.

\begin{figure}
    \centering
    \includegraphics[width=0.44\textwidth]{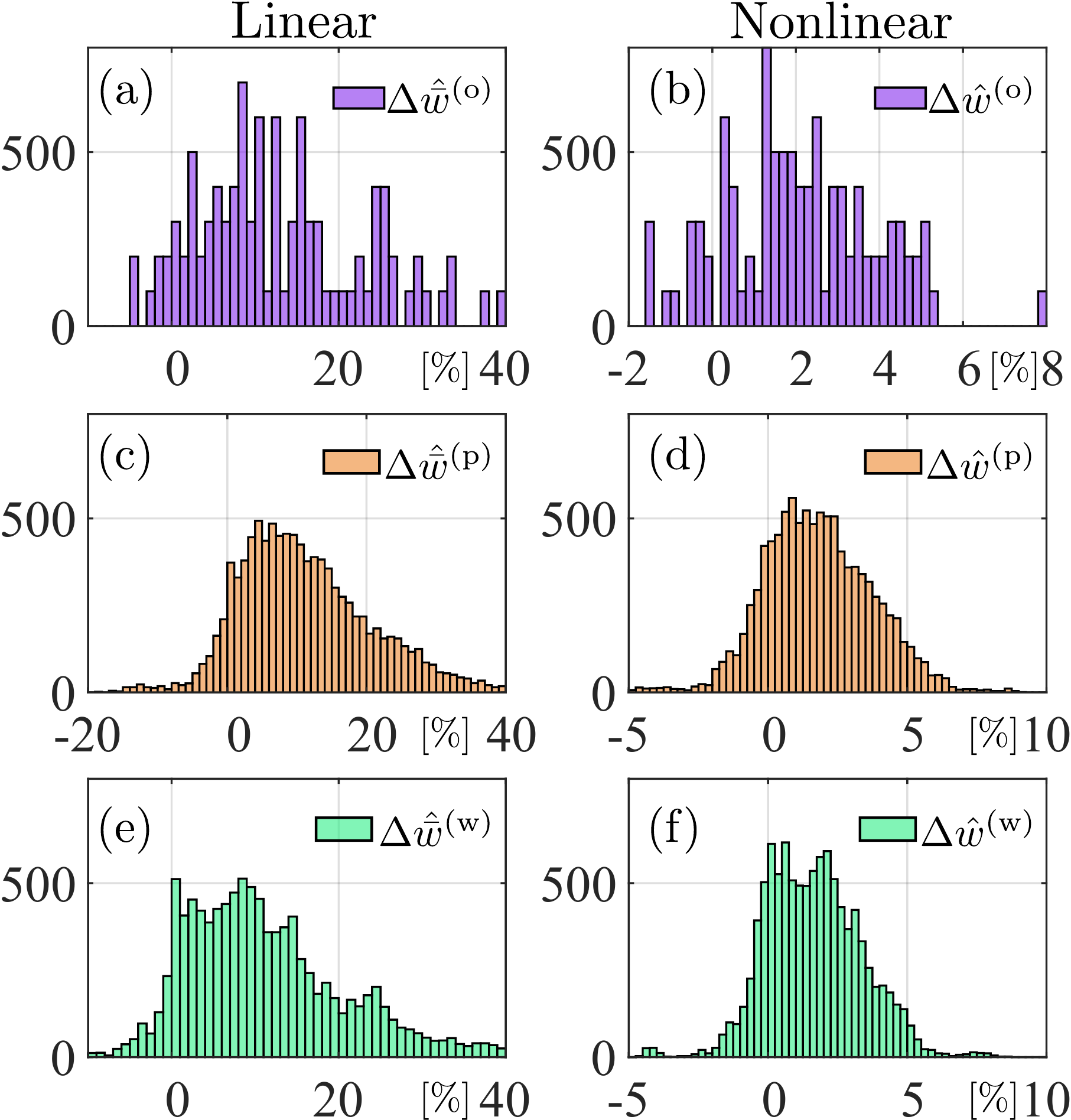}
    \caption{Comparison of controllers with and without additional delay. The histograms are distributed towards positive values in each panel, which indicates that the additional delay leads to energy savings.}
    \label{fig:synthetic_delay_compare}
\end{figure}


For lean penetration of connectivity, the number of vehicles driving between the CAT and the leading CHV may be varying. In previous simulations, we fixed the leading vehicle to ${L=8}$. Now, we investigate optimal controller parameters and the corresponding energy consumption for different leading vehicles ${L=2, \ldots, 8}$ and show that our method is agnostic to the change of leading vehicle.

\begin{figure}[t!]
    \centering
    \includegraphics[width=0.38\textwidth]{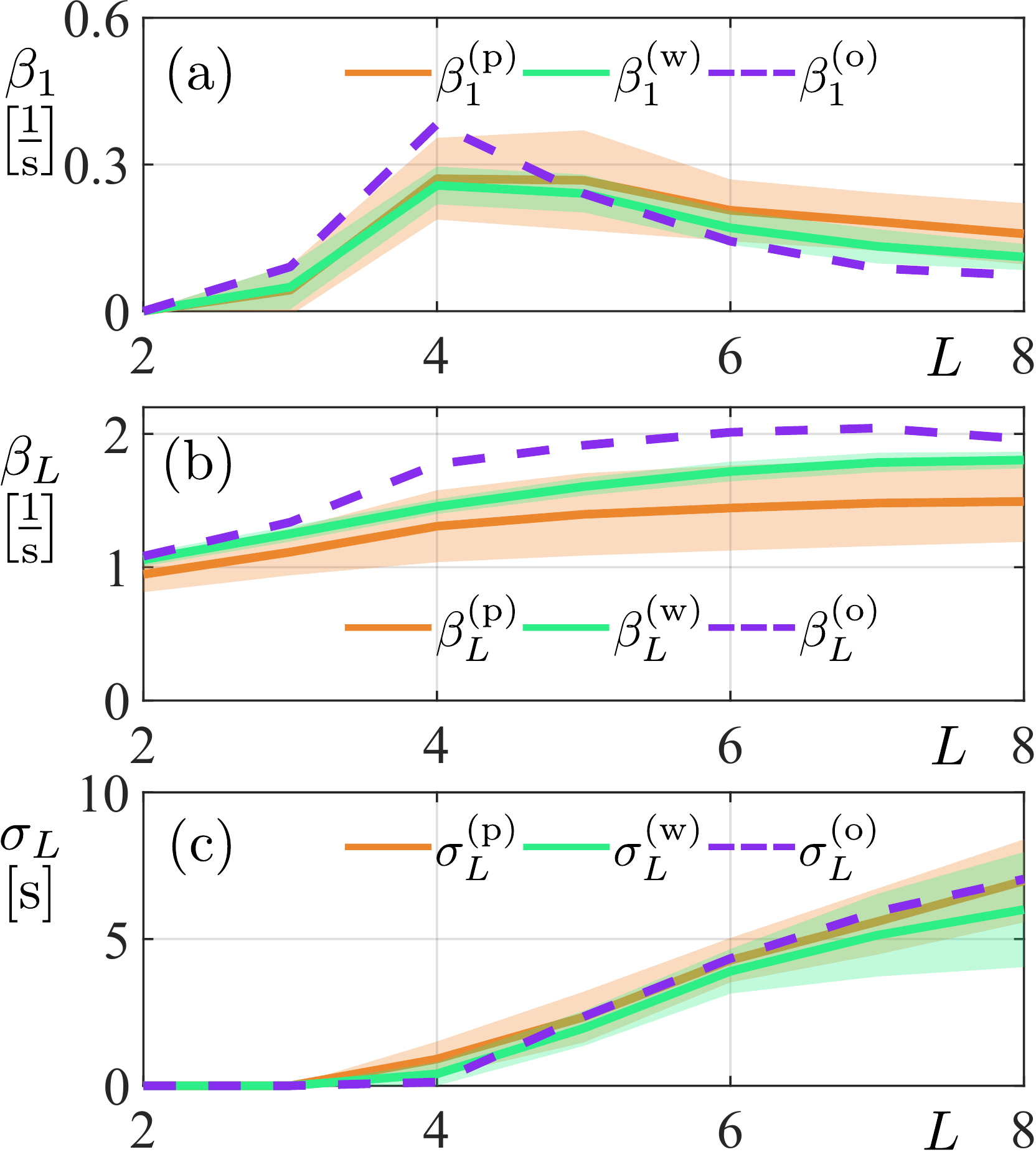}
    \caption{Controller parameters for varying leader (${L=2,\cdots, 8}$ vehicles ahead of the CAT). Each synthetic dataset produces a controller parameter triplet ${(\beta_1, \beta_L, \sigma_L)}$. The means of parameters optimized with periodogram and Welch's method are plotted with solid line. The widths of shaded areas are determined by the standard deviations. Controller parameters from oracle knowledge of spectral density are plotted with dashed lines.}
    \label{fig:varL_param}
\end{figure}

\begin{figure}[b]
    \centering
    \includegraphics[width=0.48\textwidth]{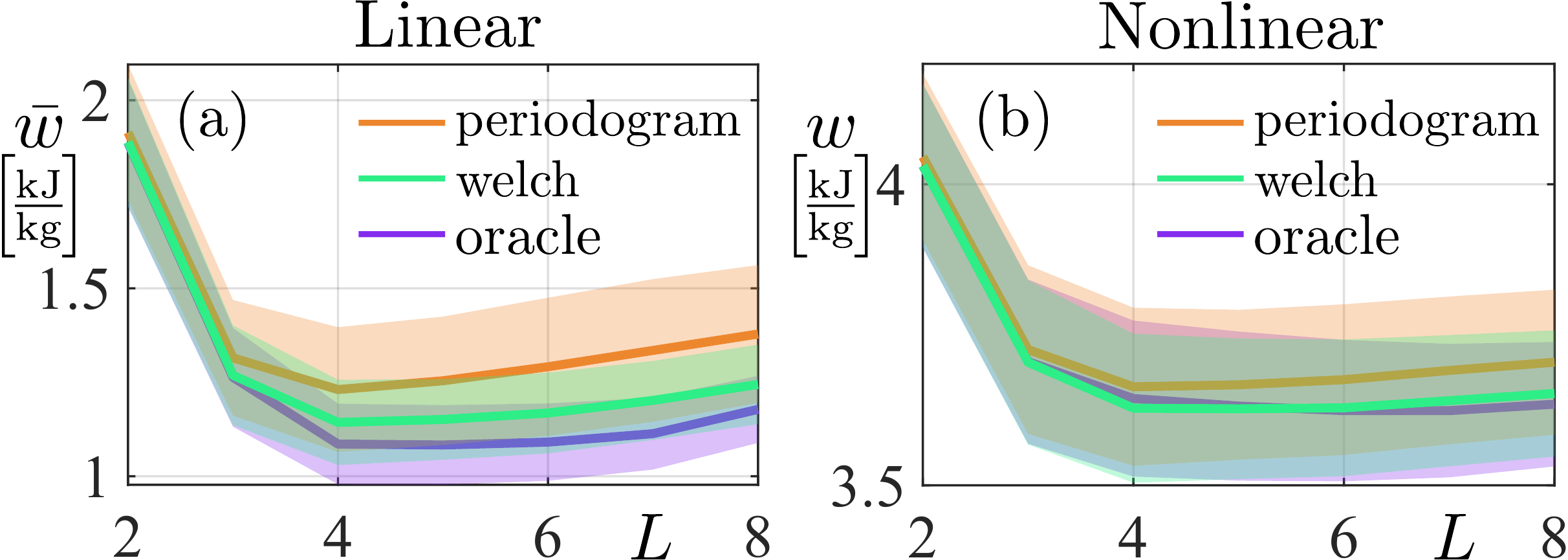}
    \caption{Energy consumption for varying leader (${L=2,\cdots, 8}$) using 10100 observation-evaluation pairs. Solid lines represent the mean values while the width of shaded areas indicate the standard variations.}
    \label{fig:varL_w}
\end{figure}

In Fig.~\ref{fig:varL_param}, we show optimized controller parameters for different leading vehicles. For each of the 101 synthetic datasets, spectral densities are estimated with periodogram and Welch's method resulting in 101 controller parameter sets ${(\beta_1^{(\lozenge)}, \beta_L^{(\lozenge)}, \sigma_L^{(\lozenge)}), \lozenge\in\{\mathrm{p,w}\}}$. The mean of the parameters are plotted with solid line, and the width of shaded area indicates standard deviation. Since the datasets are synthetic, we have access to the oracle knowledge of spectral density. The correspondingly optimized oracle parameters are plotted with dashed line. The mean periodogram and Welch parameters are close to the oracle parameters, and the Welch parameters have smaller deviation for $\beta_1$ and $\beta_L$. Also note that when $L$ is small~(${L=2,3,4}$) the optimal delays ${\sigma_L=0~\mathrm{[s]}}$. This can be explained intuitively: when leading vehicle $L$ is close to the ego vehicle, instantaneous response is preferred without additional waiting time \cite{he2019fuel}.

The energy consumption results with respect to periodogram, Welch and oracle parameters are plotted in Fig.~\ref{fig:varL_w}. The means of energy consumption are plotted with solid line and the standard deviation determines the width of shaded areas. In linear case, shown in panel (a), oracle parameters consume lower average energy than periodogram and Welch parameters, while in nonlinear case depicted in panel (b), Welch parameters have similar and sometimes better average performance as oracle parameters. In both cases, Welch parameters have lower average energy consumption than periodogram parameters. In addition, connecting to vehicles farther in the distance saves more energy than connecting to vehicles nearby due to the string instability of human-driven vehicles ahead. In other words, vehicles in the distance may have lower speed variations, which provides smoother reference trajectories for the controller. 

We make a further case study on the experimental traffic congestion data shown in Fig.~\ref{fig:traj_real_synthetic}(b). We show the optimal energy consumption as a function of the leading vehicle's index $L$ as well as the additional delay $\sigma_L$. For each fixed value of $\sigma_L$, we optimize for $\beta$ and $\beta_L$ using periodogram and Welch's method. The corresponding energy consumptions are plotted in Fig.~\ref{fig:delay_sensitivity}, and the optimal delays $\sigma_L$ chosen by periodogram and Welch's method are marked with crosses.  When the CAT is connected to vehicles nearby, for example $L=2, 3, 4, 5$, the additional delay does not bring extra energy benefits, since the propagation time of the congestion waves between vehicle $L$ and the CAT is short. However, for more distant connections, such as $L = 6, 7, 8$, incorporating the additional delay $\sigma_{\mathrm{L}}$ yields significant  energy savings. This is consistent with results in Fig.~\ref{fig:varL_param}(c). Furthermore, connecting to vehicles farther in the distance leads to more energy benefits than connecting to vehicles nearby, which is consistent with Fig.~\ref{fig:varL_w}.

\begin{figure}[t]
    \centering
    \includegraphics[width=0.48\textwidth]{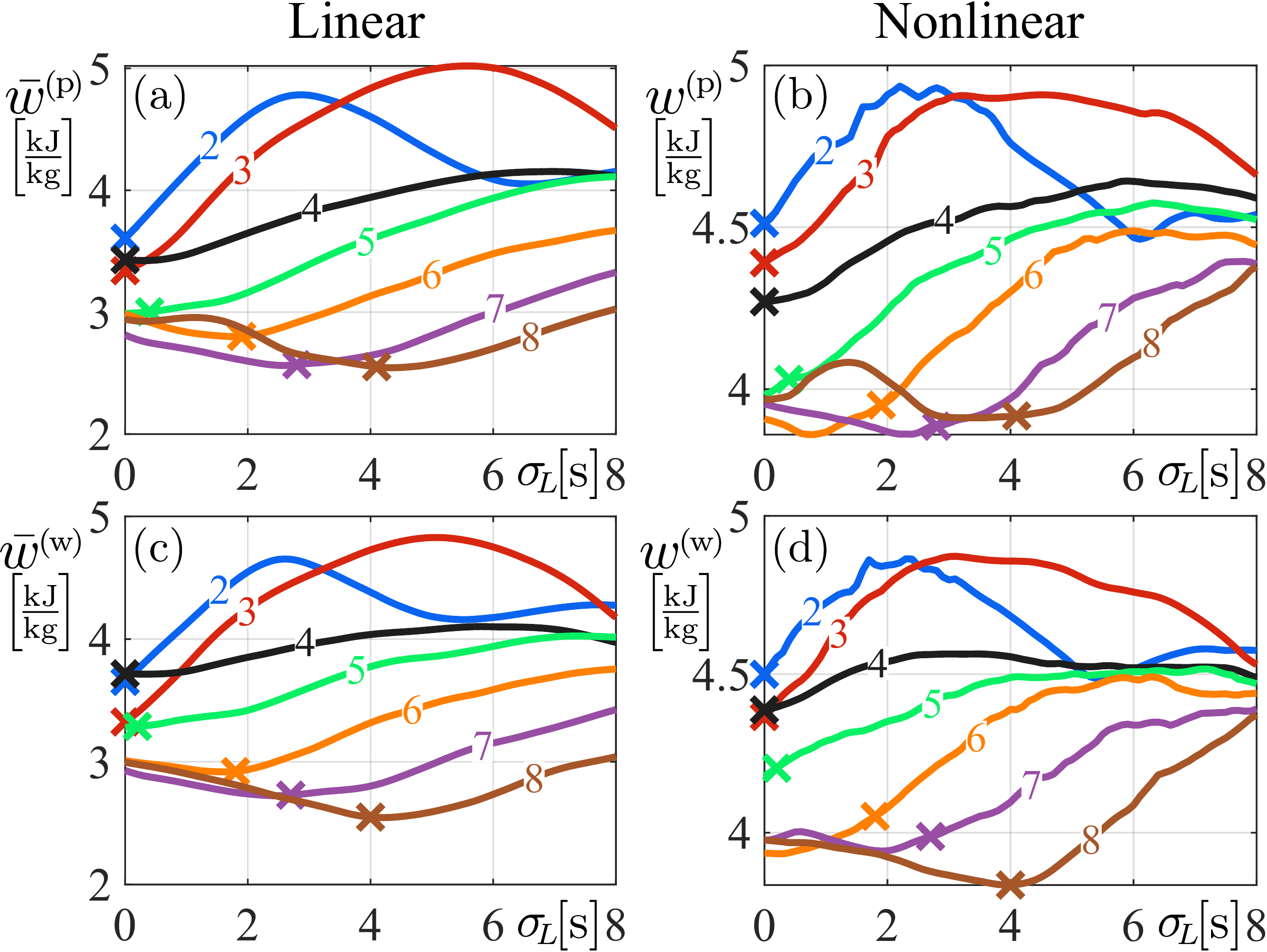}
    \caption{Energy consumption as a function of the additional delay $\sigma_L$ in the controller for different leading vehicles~($L=2,\cdots, 8$), for the case of a traffic congestion dataset. Crosses correspond to the optimal $\sigma_L$ values based on periodogram and Welch's method. Panels~(a)(c) show the energy consumptions for the linear case. }
    \label{fig:delay_sensitivity}
\end{figure}

\section{Conclusion}\label{sec:concl}

In this paper, we designed longitudinal controllers for a connected automated truck traveling in mixed traffic that consists of connected and non-connected vehicles. We leveraged that the truck has access to beyond-line-of-sight information via vehicle-to-vehicle communication, and we introduced an additional delay in the control law when responding to distant connected vehicles.
Human-driven traffic was modeled by stationary stochastic processes and car-following models, where the spectral properties of the stochastic processes were linked to the average energy consumption with a new theorem. The controllers were optimized by minimizing average energy consumption. In the underlying optimization problem, the spectral density of the stochastic process was estimated from data using spectral estimators. We showed that our optimization framework can select designs with significant energy saving. It can also facilitate improvements when utilizing motion information from distant vehicles.
Simulations with large amount of synthetic data showed that energy benefits can be realized even with lean penetration of connected vehicles, regardless of their positions in the traffic. The theory in this paper is mainly based on linear systems under stationary assumptions. It can be readily applied not only to truks but other types of vehicles independent of their propoltion system. Future research will analyze the effects of nonlinearity and transients on the controller performance, and adapt the control parameters online in real traffic.


\ifCLASSOPTIONcaptionsoff
  \newpage
\fi

\bibliographystyle{IEEEtran}
\bibliography{ref.bib}

%




\end{document}